\newcommand{\mathsym}[1]{{}}
\newcommand{\unicode}[1]{{}}
\numberwithin{equation}{section}
\newtheorem{theorem}{Theorem}[section]
\newtheorem{proposition}[theorem]{Proposition}
\newtheorem{lemma}[theorem]{Lemma}
\theoremstyle{definition}
\theoremstyle{remark}
\newtheorem*{remark}{Remark}
\begin{document}
%%%%%%%%%%%%%%%%%%%%%%%%%%%%%%%%%%%%%%%%%%%%%%%%%
%%%%%%%%%%%%  macrodefinitions
%%%%%%%%%%%%%%%%%%%%%%%%%%%%%%%%%%%%%%%%%%%%%%%%%
%  Macros (general)
%%%%%%%%%%%%%%%%%%%%%%%
%\newcommand{\MgNekp}{\mathcal{M}_{g,N+1}^{(k,p)}} %% moduli space
%\newcommand{\M}{\mathcal{M}_{g,N+1}^{(1)}}
\newcommand{\M}{\mathcal{M}}
\newcommand{\F}{\mathcal{F}}

\newcommand{\Teich}{\mathcal{T}_{g,N+1}^{(1)}}
\newcommand{\T}{\mathrm{T}}
%%%%   temporary
\newcommand{\corr}{\bf}
\newcommand{\vac}{|0\rangle}
\newcommand{\Ga}{\Gamma}
\newcommand{\new}{\bf}
\newcommand{\define}{\def}
\newcommand{\redefine}{\def}
\newcommand{\Cal}[1]{\mathcal{#1}}
\renewcommand{\frak}[1]{\mathfrak{{#1}}}
\newcommand{\Hom}{\rm{Hom}\,}
%%%%%%%%%%%%%%%%%%%%%%%%%%%%%%%%%%%%
%   Referencing Scheme of Martin
%%%%%%%%%%%%%%%%%%%%%%%%%%
\newcommand{\refE}[1]{(\ref{E:#1})}
\newcommand{\refCh}[1]{Chapter~\ref{Ch:#1}}
\newcommand{\refS}[1]{Section~\ref{S:#1}}
\newcommand{\refSS}[1]{Section~\ref{SS:#1}}
\newcommand{\refT}[1]{Theorem~\ref{T:#1}}
\newcommand{\refO}[1]{Observation~\ref{O:#1}}
\newcommand{\refP}[1]{Proposition~\ref{P:#1}}
\newcommand{\refD}[1]{Definition~\ref{D:#1}}
\newcommand{\refC}[1]{Corollary~\ref{C:#1}}
\newcommand{\refL}[1]{Lemma~\ref{L:#1}}
%%%%%%%%%%%%%%%%%%%%%%%%%%%%%%%%%%
\newcommand{\R}{\ensuremath{\mathbb{R}}}
\newcommand{\C}{\ensuremath{\mathbb{C}}}
\newcommand{\N}{\ensuremath{\mathbb{N}}}
\newcommand{\Q}{\ensuremath{\mathbb{Q}}}
\renewcommand{\P}{\ensuremath{\mathcal{P}}}
\newcommand{\Z}{\ensuremath{\mathbb{Z}}}
%%%%%%%%%%%%%%%%%%%%%%%%%%%%%%%%%%%%%%%%%%
\newcommand{\kv}{{k^{\vee}}}
%%%%%%%%%%%%%%%%%%%%%%%%%%%%%%%%%%%%%%%%%%%%%
\renewcommand{\l}{\lambda}
%%%%%%%%%%%%%%%%%%%%%%%%%%%%%%%%%%%%%%%%%%%%%%%%%%
\newcommand{\gb}{\overline{\mathfrak{g}}}
\newcommand{\dt}{\tilde d}     % Oleg
\newcommand{\hb}{\overline{\mathfrak{h}}}
\newcommand{\g}{\mathfrak{g}}
\newcommand{\h}{\mathfrak{h}}
\newcommand{\gh}{\widehat{\mathfrak{g}}}
\newcommand{\ghN}{\widehat{\mathfrak{g}_{(N)}}}
\newcommand{\gbN}{\overline{\mathfrak{g}_{(N)}}}
\newcommand{\tr}{\mathrm{tr}}
\newcommand{\gln}{\mathfrak{gl}(n)}
\newcommand{\son}{\mathfrak{so}(n)}
\newcommand{\spnn}{\mathfrak{sp}(2n)}
\newcommand{\sln}{\mathfrak{sl}}
\newcommand{\sn}{\mathfrak{s}}
\newcommand{\so}{\mathfrak{so}}
\newcommand{\spn}{\mathfrak{sp}}
\newcommand{\tsp}{\mathfrak{tsp}(2n)}
\newcommand{\gl}{\mathfrak{gl}}
\newcommand{\slnb}{{\overline{\mathfrak{sl}}}}
\newcommand{\snb}{{\overline{\mathfrak{s}}}}
\newcommand{\sob}{{\overline{\mathfrak{so}}}}
\newcommand{\spnb}{{\overline{\mathfrak{sp}}}}
\newcommand{\glb}{{\overline{\mathfrak{gl}}}}
\newcommand{\Hwft}{\mathcal{H}_{F,\tau}}
\newcommand{\Hwftm}{\mathcal{H}_{F,\tau}^{(m)}}
\newcommand{\ad}{{\rm{ad}\,}}

%%%%%%%%%%%%%%%%%%%%%%%%%%%%%%%%%%%%%%%%%%%%%%%%%%%%
\newcommand{\car}{{\mathfrak{h}}}    % Cartan subalgebra
\newcommand{\bor}{{\mathfrak{b}}}    % Borel subalgebra
\newcommand{\nil}{{\mathfrak{n}}}    % nilpotent subalgebra
\newcommand{\vp}{{\varphi}}
\newcommand{\bh}{\widehat{\mathfrak{b}}}  % Borel subalgebra of KN algebra
\newcommand{\bb}{\overline{\mathfrak{b}}}  % Borel subalgebra of KN algebra
\newcommand{\Vh}{\widehat{\mathcal V}}
\newcommand{\KZ}{Kniz\-hnik-Zamo\-lod\-chi\-kov}
\newcommand{\TUY}{Tsuchia, Ueno  and Yamada}
\newcommand{\KN} {Kri\-che\-ver-Novi\-kov}
\newcommand{\pN}{\ensuremath{(P_1,P_2,\ldots,P_N)}}
\newcommand{\xN}{\ensuremath{(\xi_1,\xi_2,\ldots,\xi_N)}}
\newcommand{\lN}{\ensuremath{(\lambda_1,\lambda_2,\ldots,\lambda_N)}}
\newcommand{\iN}{\ensuremath{1,\ldots, N}}
\newcommand{\iNf}{\ensuremath{1,\ldots, N,\infty}}

\newcommand{\tb}{\tilde \beta}
\newcommand{\tk}{\tilde \varkappa}
\newcommand{\ka}{\kappa}
\renewcommand{\k}{\varkappa}
\newcommand{\ce}{{c}}

\newcommand{\Pif} {P_{\infty}}
\newcommand{\Pinf} {P_{\infty}}
\newcommand{\PN}{\ensuremath{\{P_1,P_2,\ldots,P_N\}}}
\newcommand{\PNi}{\ensuremath{\{P_1,P_2,\ldots,P_N,P_\infty\}}}
\newcommand{\Fln}[1][n]{F_{#1}^\lambda}
\newcommand{\tang}{\mathrm{T}}
\newcommand{\Kl}[1][\lambda]{\can^{#1}}
\newcommand{\A}{\mathcal{A}}
\newcommand{\U}{\mathcal{U}}
\newcommand{\V}{\mathcal{V}}
\newcommand{\W}{\mathcal{W}}
\renewcommand{\O}{\mathcal{O}}
\newcommand{\Ae}{\widehat{\mathcal{A}}}
\newcommand{\Ah}{\widehat{\mathcal{A}}}
\newcommand{\La}{\mathcal{L}}
\newcommand{\Le}{\widehat{\mathcal{L}}}
\newcommand{\Lh}{\widehat{\mathcal{L}}}
\newcommand{\eh}{\widehat{e}}
\newcommand{\Da}{\mathcal{D}}
\newcommand{\kndual}[2]{\langle #1,#2\rangle}
\newcommand{\cins}{\frac 1{2\pi\mathrm{i}}\int_{C_S}}
\newcommand{\cinsl}{\frac 1{24\pi\mathrm{i}}\int_{C_S}}
\newcommand{\cinc}[1]{\frac 1{2\pi\mathrm{i}}\int_{#1}}
\newcommand{\cintl}[1]{\frac 1{24\pi\mathrm{i}}\int_{#1 }}
\newcommand{\w}{\omega}
\newcommand{\ord}{\operatorname{ord}}
\newcommand{\res}{\operatorname{res}}
\newcommand{\nord}[1]{:\mkern-5mu{#1}\mkern-5mu:}
\newcommand{\codim}{\operatorname{codim}}

%%%%%%%%%%%%%%%%%%%%%%%%%%%%%%%%%%%%%%%%%%%%%%%%
\newcommand{\Fn}[1][\lambda]{\mathcal{F}^{#1}}
\newcommand{\Fl}[1][\lambda]{\mathcal{F}^{#1}}
\renewcommand{\Re}{\mathrm{Re}}

\newcommand{\ha}{H^\alpha}

\define\ldot{\hskip 1pt.\hskip 1pt}
\define\ifft{\qquad\text{if and only if}\qquad}
\define\a{\alpha}
\redefine\d{\delta}
\define\w{\omega}
\define\ep{\epsilon}
\redefine\b{\beta} \redefine\t{\tau} \redefine\i{{\,\mathrm{i}}\,}
\define\ga{\gamma}
\define\cint #1{\frac 1{2\pi\i}\int_{C_{#1}}}
\define\cintta{\frac 1{2\pi\i}\int_{C_{\tau}}}
\define\cintt{\frac 1{2\pi\i}\oint_{C}}
\define\cinttp{\frac 1{2\pi\i}\int_{C_{\tau'}}}
\define\cinto{\frac 1{2\pi\i}\int_{C_{0}}}
%\define\cinttt{\frac 1{24\pi\i}\int_{C_{\tau}}}
\define\cinttt{\frac 1{24\pi\i}\int_C}
\define\cintd{\frac 1{(2\pi \i)^2}\iint\limits_{C_{\tau}\,C_{\tau'}}}
\define\dintd{\frac 1{(2\pi \i)^2}\iint\limits_{C\,C'}}
\define\cintdr{\frac 1{(2\pi \i)^3}\int_{C_{\tau}}\int_{C_{\tau'}}
\int_{C_{\tau''}}}
\define\im{\operatorname{Im}}
\define\re{\operatorname{Re}}
%\define\res{\text{res}}
\define\res{\operatorname{res}}
\redefine\deg{\operatornamewithlimits{deg}}
\define\ord{\operatorname{ord}}
\define\rank{\operatorname{rank}}
\define\fpz{\frac {d }{dz}}
\define\dzl{\,{dz}^\l}
\define\pfz#1{\frac {d#1}{dz}}

\define\K{\Cal K}
\define\U{\Cal U}
\redefine\O{\Cal O}
\define\He{\text{\rm H}^1}
\redefine\H{{\mathrm{H}}}
\define\Ho{\text{\rm H}^0}
\define\A{\Cal A}
\define\Do{\Cal D^{1}}
\define\Dh{\widehat{\mathcal{D}}^{1}}
\redefine\L{\Cal L}
\newcommand{\ND}{\ensuremath{\mathcal{N}^D}}
\redefine\D{\Cal D^{1}}
\define\KN {Kri\-che\-ver-Novi\-kov}
\define\Pif {{P_{\infty}}}
\define\Uif {{U_{\infty}}}
\define\Uifs {{U_{\infty}^*}}
\define\KM {Kac-Moody}
\define\Fln{\Cal F^\lambda_n}
%%%%%%%%%%%%%%%%%%%%
\define\gb{\overline{\mathfrak{ g}}}
\define\G{\overline{\mathfrak{ g}}}
\define\Gb{\overline{\mathfrak{ g}}}
\redefine\g{\mathfrak{ g}}
\define\Gh{\widehat{\mathfrak{ g}}}
\define\gh{\widehat{\mathfrak{ g}}}
%%%%%%%%%%%%%%%%%%%%%%%%%%
\define\Ah{\widehat{\Cal A}}
\define\Lh{\widehat{\Cal L}}
\define\Ugh{\Cal U(\Gh)}
\define\Xh{\hat X}
\define\Tld{...}
\define\iN{i=1,\ldots,N}
\define\iNi{i=1,\ldots,N,\infty}
\define\pN{p=1,\ldots,N}
\define\pNi{p=1,\ldots,N,\infty}
\define\de{\delta}

\define\kndual#1#2{\langle #1,#2\rangle}
\define \nord #1{:\mkern-5mu{#1}\mkern-5mu:}
%\define \MgN{{\Cal M}_{g,N}} %% moduli space
%\define \MgNp{{\Cal M}_{g,N}^{(p)}} %% moduli space
\newcommand{\MgN}{\mathcal{M}_{g,N}} %% moduli space
\newcommand{\MgNeki}{\mathcal{M}_{g,N+1}^{(k,\infty)}} %% moduli space
\newcommand{\MgNeei}{\mathcal{M}_{g,N+1}^{(1,\infty)}} %% moduli space
\newcommand{\MgNekp}{\mathcal{M}_{g,N+1}^{(k,p)}} %% moduli space
\newcommand{\MgNkp}{\mathcal{M}_{g,N}^{(k,p)}} %% moduli space
\newcommand{\MgNk}{\mathcal{M}_{g,N}^{(k)}} %% moduli space
\newcommand{\MgNekpp}{\mathcal{M}_{g,N+1}^{(k,p')}} %% moduli space
\newcommand{\MgNekkpp}{\mathcal{M}_{g,N+1}^{(k',p')}} %% moduli space
\newcommand{\MgNezp}{\mathcal{M}_{g,N+1}^{(0,p)}} %% moduli space
\newcommand{\MgNeep}{\mathcal{M}_{g,N+1}^{(1,p)}} %% moduli space
\newcommand{\MgNeee}{\mathcal{M}_{g,N+1}^{(1,1)}} %% moduli space
\newcommand{\MgNeez}{\mathcal{M}_{g,N+1}^{(1,0)}} %% moduli space
\newcommand{\MgNezz}{\mathcal{M}_{g,N+1}^{(0,0)}} %% moduli space
\newcommand{\MgNi}{\mathcal{M}_{g,N}^{\infty}} %% moduli space
\newcommand{\MgNe}{\mathcal{M}_{g,N+1}} %% moduli space
\newcommand{\MgNep}{\mathcal{M}_{g,N+1}^{(1)}} %% moduli space
\newcommand{\MgNp}{\mathcal{M}_{g,N}^{(1)}} %% moduli space
\newcommand{\Mgep}{\mathcal{M}_{g,1}^{(p)}} %% moduli space
\newcommand{\MegN}{\mathcal{M}_{g,N+1}^{(1)}} %% moduli space

%\define \mpt{(M,P_1,P_2,\ldots, P_N,\Pif)} %% moduli point
%\define \mpp{(M,P_1,P_2,\ldots, P_N)} %% moduli point
%\define \MgNn{{\Cal M}_{g,N}^{(1)}} %% moduli space
%\define \MgNen{{\Cal M}_{g,N+1}^{(1)}} %% moduli space
%\define \Mgo{{\Cal M}_{g,0}} %% moduli space
%\define \mptn{(M,P_1,P_2,\ldots, P_N,\Pif,z_1,\ldots,z_N,z_\infty)}
 %% moduli point
%\define \mppn{(M,P_1,P_2,\ldots, P_N,z_1,\ldots,z_N)} %% moduli point
\define \sinf{{\widehat{\sigma}}_\infty}
\define\Wt{\widetilde{W}}
\define\St{\widetilde{S}}
\newcommand{\SigmaT}{\widetilde{\Sigma}}
\newcommand{\hT}{\widetilde{\frak h}}
\define\Wn{W^{(1)}}
\define\Wtn{\widetilde{W}^{(1)}}
\define\btn{\tilde b^{(1)}}
\define\bt{\tilde b}
\define\bn{b^{(1)}}
\define \ainf{{\frak a}_\infty} %matrices with a finite number of
                                %diagonals

%
%%%%%%%%%% Olegs definitions %%%%%%%%%%%%%%%%%%%%%%%%%%%%%%%%%%%
\define\eps{\varepsilon}    % Oleg
\newcommand{\e}{\varepsilon}
\define\doint{({\frac 1{2\pi\i}})^2\oint\limits _{C_0}
       \oint\limits _{C_0}}                            % Oleg
\define\noint{ {\frac 1{2\pi\i}} \oint}   % Oleg
\define \fh{{\frak h}}     % Oleg
\define \fg{{\frak g}}     % Oleg
\define \GKN{{\Cal G}}   % affine Krichever-Novikov algebra % Oleg
\define \gaff{{\hat\frak g}}   % affine Krichever-Novikov algebra
\define\V{\Cal V}
\define \ms{{\Cal M}_{g,N}} %% moduli space
\define \mse{{\Cal M}_{g,N+1}} %% moduli space
%%%%%%%%%%%%%%%%%%%%%%%%%%%%%%%%%%%%%%
\define \tOmega{\Tilde\Omega}
\define \tw{\Tilde\omega}
\define \hw{\hat\omega}
\define \s{\sigma}
\define \car{{\frak h}}    % Cartan subalgebra
\define \bor{{\frak b}}    % Borel subalgebra
\define \nil{{\frak n}}    % nilpotent subalgebra
\define \vp{{\varphi}}
\define\bh{\widehat{\frak b}}  % Borel subalgebra of KN algebra
\define\bb{\overline{\frak b}}  % Borel subalgebra of KN algebra
\define\KZ{Knizhnik-Zamolodchikov}
\define\ai{{\alpha(i)}}
\define\ak{{\alpha(k)}}
\define\aj{{\alpha(j)}}
\newcommand{\calF}{{\mathcal F}}
\newcommand{\ferm}{{\mathcal F}^{\infty /2}}
%%%%%%%%%%%%%%%%%%%%%%%%%%%%%%%%%%%%%%%%%%%
%%%%%%%%%%%%%%%%%%%%%%%%%%%%%%%%%
%%%%%%%%%%%%%%   for laxcent
%%%%%%%%%%%%%%%%%%%%%%%%%%%%%%%%%%
\newcommand{\laxgl}{\overline{\mathfrak{gl}}}
\newcommand{\laxsl}{\overline{\mathfrak{sl}}}
\newcommand{\laxso}{\overline{\mathfrak{so}}}
\newcommand{\laxsp}{\overline{\mathfrak{sp}}}
\newcommand{\laxs}{\overline{\mathfrak{s}}}
\newcommand{\laxg}{\overline{\frak g}}
\newcommand{\bgl}{\laxgl(n)}
%%%%%%%%%%%%%%%%%%%%%%%%
\newcommand{\tX}{\widetilde{X}}
\newcommand{\tY}{\widetilde{Y}}
\newcommand{\tZ}{\widetilde{Z}}
%%%%%%%%%%%%%%%%%%%%%%%%%%%%%%%%%%%%%%%%%%
%%%%%%%%%%%%  END of macrodefinitions
%%%%%%%%%%%%%%%%%%%%%%%%%%%%%%%%%%%%%%%%%

%%%%%%%%%%%%%%%%%%%%%%%%%%%%%%%%%
%Top-Matter
%%%%%%%%%%%%%%%%%%%%%%%%%%%%%%%
%%%%%%%%%%%%%%%%%    private header  %%%%%%%%%%%%%%%%%%%%

%\large{

\title[Some reductions]{Some reductions of rank 2 and genera 2 and 3 Hitchin systems}
\author[O.K. Sheinman]{Oleg K. Sheinman}
\thanks{}
\address{Department of Geometry and Topology, Steklov Mathematical Institute, Moscow, and Independent University of Moscow}
%\dedicatory{}
\maketitle
%%%%%%%%%%%%%%%%%%%%%%%%%%%%%%%%%%%%%%%%
%\input Introduction.tex
\begin{abstract}
Certain reductions of the rank~2, genera~2 and~3 Hitchin systems are considered, which are shown to give an integrable system of 2, resp. 3, interacting points on the line. It is shown that the reduced systems are particular cases of a certain universal integrable system related to the Lagrange interpolation polynomial. Admissibility of the reduction is proved using computer technique. The corresponding codes are given in the text.
\end{abstract}
\tableofcontents
\section{Introduction}
General procedures of finding the algebraic-geometric solutions, and the action-angle coordinates for Hitchin systems are proposed in \cite{Klax}. For the rank~2 genus~2 systems the problem had been considered earlier in \cite{Gaw} based on the analogy with the Neumann system, and the results of \cite{Prev}. However, if we ask what a dynamical system corresponds to a given Hitchin Hamiltonian\footnote{The author is grateful to I.M.Krichever for this question}, say, in the Tyurin parametrization, we will find out that it is a problem even write down the corresponding equations because the expressions for the Hamiltonians consist of thousands of symbols.

In the present work we consider \emph{reductions} of Hitchin systems, i.e. their restrictions to invariant subvarieties of a positive codimension. On this way we obtain some manageable  rank~2, genera~2 and~3 systems, certain relations for their dynamical variables, and some particular solutions to the original systems.

It turns out to be that on the genus $g$ curve the reduced system is integrable and coincides with the universal integrable system related to the Lagrange interpolation polynomial of degree $g-1$.

The paper is organized as follows. In \refS{Red} we describe the reduction of the Hitchin systems of rank~2, genera~2 and~3, prove it to be an admissible reduction, observe that the reduced systems are completely integrable and  find some their particular solutions (which are solutions to the original Hitchin system as well). The proof of admissibility is computational. The codes of the corresponding programs are given in Sections \ref{S:Genus2}, \ref{S:Genus3}.

\refS{Lagrange} is devoted to the integrable system related to the Lagrange interpolation polynomial.

The author is grateful to S.P.Novikov who multiply put the question of detailed investigation of Hitchin systems of small ranks and genera, and to I.M.Krichever for discussions of setting the problem and some results.

\section{Reduction, genera 2 and 3}\label{S:Red}
Let $\Sigma$ be a genus $g$ hyperelliptic curve given by the equation $y^2=P_{2g+1}(x)$ where $P_{2g+1}$ is a degree $2g+1$ polynomial. Consider a Lax operator with the spectral parameter on this curve, taking values in the full linear algebra~$\gl(2)$, given in the Tyurin parametrization \cite{Klax} (see  \cite{Sh_DAN_LOA&gr}--\cite{Tyur65} for the further developments):
\[
  L(x,y)=\sum_{i=0}^{g-1}L_ix^i+\sum_{s=1}^{2g}\a_s\b_s^T\frac{y+b_s}{x-a_s}\ ,
\]
where $(a_s,b_s)\in\Sigma$, $\a_s,\b_s\in\C^2$, $\b_s^T\a_s=0$, $L(a_s,b_s)\a_s=\kappa_s\a_s$, $\kappa_s\in\C$  for any $s=1,\ldots,2g$ (the upper $T$ denotes transposing here). Observe that there arise no singularity in the above eigenvalue conditions due to the assumption $\b_s^T\a_s=0$, and also because the points $a_s$ ($s=1,\ldots,2g$) are assumed to be different. By fixing a gauge, we set the matrix formed by the vectors $\a_{2g-1},\a_{2g}$ to the unit $2\times 2$ matrix. Below, $g=2,3$.

A rank~2 Hitchin system on $\Sigma$ is given by the canonical symplectic structure $\w=\sum_{s=1}^{2g}\a_s^T\wedge\b_s+\sum_{s=1}^{2g} a_s\wedge\kappa_s$, and the Hamiltonians of the form
\[
     H_{k,m}~=~\res_{z=0} z^{m-1}\tr L(z)^k {\mathrm d}x(z)/{y(z)} ,
\]
where $z=\frac{1}{\sqrt{x}}$ is a local parameter in the neighborhood of the point $x=\infty$, and $k,m\in\Z$.

In the case $g=2$ we consider the canonical equations corresponding to the Hamiltonian~$H_{2,2}$. Assuming $p_1=0$ where $y^2=x^5+p_1x^4+\ldots+p_5$ is the equation of the curve we state
\begin{proposition}\label{P:gen2}
For a genus~2 curve $\Sigma$ the reduction  $\a_{11}=\a_{22}=0$, $\b_1=\b_2=\b_3=\b_4=0$ is admissible (compatible with the system), and the corresponding reduced system has the form
\[
\begin{aligned}
&\dot a_1=-\frac{2 \left(a_1 \kappa _4+a_4 \left(-2 \kappa _1+\kappa _4\right)\right)}{\left(a_1-a_4\right){}^2},\quad \dot \kappa_1=\frac{ \kappa _1-\kappa _4}{a_1-a_4}\,\dot a_1    \\
&\dot a_4=-\frac{2 \left(a_4 \kappa _1+a_1 \left(\kappa _1-2 \kappa _4\right)\right) }{\left(a_1-a_4\right){}^2},\quad\ \ \dot \kappa_4=\frac{ \kappa _1-\kappa _4}{a_1-a_4}\,\dot a_4     \\
&\dot a_2=-\frac{2 \left(a_2 \kappa _3+a_3 \left(-2 \kappa _2+\kappa _3\right)\right)}{\left(a_2-a_3\right){}^2},\quad \dot \kappa_2=\frac{ \kappa _2-\kappa _3}{a_2-a_3}\,\dot a_2       \\
&\dot a_3=-\frac{2 \left(a_3 \kappa _2+a_2 \left(\kappa _2-2 \kappa _3\right)\right) }{\left(a_2-a_3\right){}^2},\quad\ \ \dot\kappa_3=\frac{ \kappa_2-\kappa_3}{a_2-a_3}\,\dot a_3     \\
&\phantom{aaaa}\dot \alpha_{12}=\alpha _{12}\frac{b_2+b_3}{a_2-a_3}\,\dot a_3,\quad\dot\alpha_{21}=\alpha _{21}\frac{b_1+b_4}{a_1-a_4}\,\dot a_4 .
\end{aligned}
\]
The system of the first $8$ equations (for the system of points) is Hamiltonian with the Hamiltonian
\begin{equation}\label{E:H_gen2}
\begin{aligned}
H_{2,2}^{(r)}&= 2\left(\kappa _1-\kappa _4\right)\left(a_4\kappa _1-a_1\kappa _4\right)/\left(a_1-a_4\right){}^2
\\
&+ 2\left(\kappa _2-\kappa _3\right)\left(a_3\kappa _2-a_2\kappa _3\right)/\left(a_2-a_3\right){}^2
\end{aligned}
\end{equation}
\end{proposition}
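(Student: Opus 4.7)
My plan is to structure the proof in three parts matching the three claims of the proposition: admissibility of the reduction, the explicit form of the reduced equations, and the Hamiltonian description of the first eight of them.

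For admissibility, the first task is to make the Lax operator fully explicit as a function of the Tyurin data. After fixing the gauge $[\alpha_3|\alpha_4] = I$, so that $\alpha_3 = (1,0)^T$ and $\alpha_4 = (0,1)^T$, the eight matrix entries of $L_0$ and $L_1$ are determined by the eight scalar eigenvalue conditions $L(a_s, b_s)\alpha_s = \kappa_s \alpha_s$ together with $\beta_s^T \alpha_s = 0$; I would solve this linear system symbolically. Plugging the result into $\tr L(x,y)^2$ and extracting the appropriate Laurent coefficient in the local parameter $z = 1/\sqrt{x}$ at infinity produces $H_{2,2}$ as a rational function of the dynamical variables. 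Using the canonical form $\omega = \sum_s d\alpha_s^T \wedge d\beta_s + \sum_s da_s \wedge d\kappa_s$, I would then compute $\{H_{2,2}, \alpha_{11}\}$, $\{H_{2,2}, \alpha_{22}\}$ and the components $\{H_{2,2}, \beta_{s,i}\}$ for $s = 1, \dots, 4$ and $i = 1,2$, and verify that each vanishes identically when restricted to the constraint surface $\alpha_{11} = \alpha_{22} = 0$, $\beta_1 = \dots = \beta_4 = 0$. Admissibility is exactly this set of identities.

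For the reduced equations, I would observe that on the constraint surface the Tyurin summands vanish, so $L(x,y) = L_0 + L_1 x$, and the eigenvalue conditions with $\alpha_1 = (0, \alpha_{12})^T$ and $\alpha_2 = (\alpha_{21}, 0)^T$ force both $L_0$ and $L_1$ to be diagonal, with the $(1,1)$ entries determined by the pair $(a_2, a_3, \kappa_2, \kappa_3)$ and the $(2,2)$ entries by $(a_1, a_4, \kappa_1, \kappa_4)$. Consequently $\tr L^2$ splits as a sum of two independent one-variable polynomials in $x$, one per pair, and the dynamics decouples into two two-particle subsystems. Substituting into $\dot x = \{H_{2,2}, x\}$ for $x \in \{a_s, \kappa_s, \alpha_{12}, \alpha_{21}\}$ produces the ten equations in the statement. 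For the Hamiltonian form of the first eight, I would verify by direct differentiation that $\partial H_{2,2}^{(r)}/\partial \kappa_s$ and $-\partial H_{2,2}^{(r)}/\partial a_s$ reproduce the stated expressions for $\dot a_s$ and $\dot \kappa_s$; since $H_{2,2}^{(r)}$ in \refE{H_gen2} already splits into two noninteracting two-particle Hamiltonians, this check reduces to a single two-particle identity, applied twice.

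The principal obstacle is the sheer algebraic size of $H_{2,2}$ and its Poisson brackets as functions of the unreduced variables --- this is emphasized in the introduction as the reason the straightforward approach is impractical --- which makes the admissibility identities infeasible to verify by hand. I would therefore perform the symbolic computation in Mathematica, as is done explicitly in \refS{Genus2}. Once admissibility is established and $L$ has become diagonal, the remaining steps are elementary and could in principle be carried out by hand.
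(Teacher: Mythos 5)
Your proposal follows essentially the same route as the paper: admissibility is established by symbolically computing $H_{2,2}$ (after solving the eigenvalue conditions for $L_0,L_1$ in the fixed gauge) and checking that the Hamiltonian derivatives of the constrained variables vanish on the surface $\alpha_{11}=\alpha_{22}=0$, $\beta_s=0$, with the reduced equations and the Hamiltonian form then read off by direct differentiation --- exactly the Mathematica computation of Section~4 --- and your added observation that on the constraint surface $L$ becomes diagonal with linear interpolating entries, so that $\tr L^2$ splits into two noninteracting pairs, is a correct structural remark consistent with the identity $H_{2,2}^{(r)}=2F_0F_1$ of Section~3. The only caveats are a transposed pair of indices (under the reduction $\alpha_1=(0,\alpha_{21})^T$ and $\alpha_2=(\alpha_{12},0)^T$, not the reverse) and that the equations for $\alpha_{12},\alpha_{21}$ need $\partial H_{2,2}/\partial\beta_{12}$ and $\partial H_{2,2}/\partial\beta_{21}$ evaluated on the constraint surface, i.e.\ the unreduced Hamiltonian to first order in the $\beta$'s rather than the diagonalized $L$ alone, so they are not quite ``elementary by hand'' --- but you do correctly obtain them from the full bracket.
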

The proof consists of a direct calculation using the program "Wolfram Mathematica":  it turns out to be that the derivatives of the corresponding variables by virtue of the system vanish after plugging  $\a_{11}\!=\!\a_{22}\!=0$, $\b_1=\b_2=\b_3=\b_4=0$. The programs for this and other calculations are given below (in \refS{Genus2} for genus~2, and in \refS{Genus3} for genus~3).

The obtained system of equations splits into the two consistent subsystems: one of them for unknowns $a_1$, $a_4$, $\kappa_1$, $\kappa_4$, $\a_{21}$, and another for unknowns $a_2$, $a_3$, $\kappa_2$, $\kappa_3$, $\a_{12}$. Both can be resolved similarly. For example, we have for the first one
$
   \frac{\dot\kappa_1-\dot\kappa_4}{\kappa_1-\kappa_4} = \frac{\dot a_1-\dot a_4}{a_1-a_4}\ ,
$
which implies $\kappa_1=c_1a_1+c_2$, $\kappa_4=c_1a_4+c_3$ ($c_1$, $c_2$, $c_3$ are constants). For $c_1=c_2=c_3$ the arising equations for $a_1$, $a_4$ can be completely integrated: $a_1+a_4+2a_4^2=ct+\tilde{\tilde c}$, $2a_4+\ln(a_1+a_4)+{\tilde c}=0$, where $c$, ${\tilde c}$, $\tilde{\tilde c}$ are constants. Next, the equation for $\a_{21}$ can be transformed to the form $d\ln\a_{21}=\frac{b_1+b_4}{a_1-a_4}\, da_4$. The right hand side can be considered as a function of only $a_4$, which is, moreover, already known, and we have $\a_{21}=\exp\int\frac{b_1+b_4}{a_1-a_4}\, da_4$.

%{\bf Для $\sln(2)$ система может быть и не распадется, так как с самого %начала надо положить $l_{22}=-l_{11}$, $l_{122}=-l_{111}$. Это свяжет %уравнения для строк.}

Besides the reduction  $\a_{11}\!=\!\a_{22}\!=\! 0$ there is another one  $\a_{12}\!=\!\a_{21}\!=\! 0$ which descends to the previous by means of permutation of the points $a_1$ and $a_2$.

For a genus~3 curve
\[
y^2=x^7+ p_1x^6+p_2x^5+p_3x^4+p_4x^3+p_5x^2+p_6x+p_7
\]
we fix a gauge by requiring that the vectors $\a_5$ and $\a_6$ form the unit matrix. Then for $p_1=p_2=p_3=0$ the reduced system splits to the two (independent) triples of interacting points, and the equations for the parameters  $\a_{ij}$. For example for the Hamiltonian $H_{2,4}$ under the reduction
\[
\begin{aligned}
  &\alpha _{11}= 0,\alpha _{22}= 0,\alpha _{13}= 0,\alpha _{24}= 0,\\
  &\phantom{aaaaa}\beta _s= 0\quad (s=1,\ldots,6)
  \end{aligned}
\]
we obtain for the points $a_1$, $a_3$, $a_6$ (for which the first coordinates of the corresponding vectors $\a_s$ are equal to zero) a Hamiltonian system with the Hamiltonian
\[
\begin{aligned}
&H_{2,4}^{(r)}=\frac{\left(a_6^2 \left(-\kappa _1+\kappa _3\right)+a_3^2 \left(\kappa _1-\kappa _6\right)+a_1^2 \left(-\kappa _3+\kappa _6\right)\right){}^2}{\left(a_1-a_3\right){}^2 \left(a_1-a_6\right){}^2 \left(a_3-a_6\right){}^2} +
\\
&+
\frac{2 \left(a_6 \left(\kappa _3-\kappa _1\right)+a_3 \left(\kappa _1-\kappa _6\right)+a_1 \left(\kappa _6-\kappa _3\right)\right)}{\left(a_1-a_3\right) \left(a_1-a_6\right) \left(a_3-a_6\right)}\times
\\
&\times\frac{a_6^2\left(a_1\kappa_3-a_3\kappa_1\right)+a_3^2 \left(a_6 \kappa _1-a_1 \kappa _6\right)+a_1^2\left(a_3\kappa_6-a_6\kappa_3\right)}{\left(a_1-a_3\right) \left(a_1-a_6\right) \left(a_3-a_6\right)}
\end{aligned}
\]
The following relations between $\dot\kappa _i$ and ${\dot a}_i$ hold:
\begin{equation}\label{E:dynamics}
\begin{aligned}
\dot\kappa_1&=\frac{\left(a_1-a_6\right){}^2\left(\kappa_1-\kappa _3\right)-\left(a_1-a_3\right){}^2\left(\kappa _1-\kappa _6\right)}{\left(a_1-a_3\right)\left(a_1-a_6\right)\left(a_3-a_6\right)}\dot a_1
\\
\dot\kappa_3&=\frac{\left(a_3-a_6\right){}^2\left(\kappa _1-\kappa _3\right)+\left(a_1-a_3\right){}^2\left(\kappa _3-\kappa _6\right)}{\left(a_1-a_3\right)\left(a_1-a_6\right)\left(a_3-a_6\right)}\dot a_3
\\
\dot\kappa_6&=\frac{\left(a_6-a_1\right)^2\left(\kappa_3-\kappa _6\right)-\left(a_6-a_3\right){}^2\left(\kappa _1-\kappa _6\right)}{\left(a_1-a_3\right)\left(a_1-a_6\right)\left(a_3-a_6\right)}\dot a_6
\end{aligned}
\end{equation}
Also we have:
\begin{equation}\label{E:dynamics1}
  \dot a_1+\dot a_3+\dot a_6=
  \frac{2 \left(a_6 \left(-\kappa _1+\kappa _3\right)+a_3 \left(\kappa _1-\kappa _6\right)+a_1 \left(-\kappa _3+\kappa _6\right)\right)}{\left(a_1-a_3\right) \left(a_1-a_6\right) \left(a_3-a_6\right)}.
\end{equation}
The relations \refE{dynamics} are proven below (see \refL{commfact1} and after that). As for \refE{dynamics1}, it still is only a result of computation.

Under assumption $\kappa_1=\kappa_3=\kappa_6=K(=const)$ we obtain the following equations:
\[
\begin{aligned}
\dot a_1&=\frac{2K}{\left(a_1-a_3\right) \left(a_1-a_6\right)}
\\
\dot a_3&=-\frac{2K}{\left(a_1-a_3\right) \left(a_3-a_6\right)}
\\
\dot a_6&=\frac{2K}{ \left(a_1-a_6\right)\left(a_3-a_6\right)}
\end{aligned}
\]
and
\[
\begin{aligned}
\frac{\dot\a_{21}}{\a_{21}}&=\frac{2K\left(b_1+b_6\right) }{ \left(a_1-a_6\right){}^2\left(a_3-a_6\right)}-\frac{2K\left(b_1+b_3\right) }{\left(a_1-a_3\right){}^2\left(a_3-a_6\right)}
\\
\frac{\dot\a_{23}}{\a_{23}}&=\frac{2K\left(b_3+b_6\right) }{\left(a_1-a_6\right)\left(a_3-a_6\right){}^2}-\frac{2K\left(b_1+b_3\right) }{\left(a_1-a_3\right){}^2\left(a_1-a_6\right) },
\end{aligned}
\]
in particular it follows that
$
\dot a_1+ \dot a_3+ \dot a_6=0
$.
The variables in the system of equations on $a_s$ do separate: setting
\[
a_1+ a_3+ a_6=C_1
\]
we obtain
\[
(\dot a_1-\dot a_3)(a_1-a_3)=-3(\dot a_1+\dot a_3)(a_1+a_3-2C_1/3)
\]
which is resolved in the form:
\[
\begin{aligned}
 & a_1^3+f_1C_1a_1^2+f_2C_1^2a_1=-2Kt+C_2, \\
 &a_3=e^{\pm 2\pi i/3}a_1+f_0C_1,\quad a_6=C_1-a_1-a_3,
\end{aligned}
\]
where $f_0,f_1,f_2\in\C$ are known coefficients (the explicit expressions for them are omitted here), $C_1,C_2$ are the integration constants. In particular, for $C_1=0$ we obtain
\[
a_1=\sqrt[3]{-2Kt+C_2},\quad a_3=e^{\pm 2\pi i/3}a_1,\quad a_1+a_3+a_6=0.
\]
The  variables $\a_{is}$ can be expressed via quadratures, similar to the genus~2 case.
\begin{proposition}
The reduced systems are completely integrable. The corresponding full system of integrals in involution for $g=2$ (and the system of the points $\{a_1,a_4\}$) is as follows:
\[
F_0= \frac{a_1 \kappa _4-a_4 \kappa _1}{a_1-a_4},
\quad
F_1=\frac{\kappa _1-\kappa _4}{a_1-a_4},
\]
and for $g=3$ as follows:
\[
\begin{aligned}
&F_0=\frac{a_1 a_6 \left(a_6-a_1\right) \kappa _3+a_3^2 \left(a_6 \kappa _1-a_1 \kappa _6\right)+a_3 \left(a_1^2 \kappa _6-a_6^2 \kappa _1\right)}{\left(a_1-a_3\right)\left(a_1-a_6\right)\left(a_3-a_6\right)}
\\
&F_1=\frac{a_6^2 \left(-\kappa _1+\kappa _3\right)+a_3^2 \left(\kappa _1-\kappa _6\right)+a_1^2 \left(-\kappa _3+\kappa _6\right)}{\left(a_1-a_3\right)\left(a_1-a_6\right) \left(a_3-a_6\right)}
\\
&F_2=\frac{a_6 \left(-\kappa _1+\kappa _3\right)+a_3 \left(\kappa _1-\kappa _6\right)+a_1 \left(-\kappa _3+\kappa _6\right)}{\left(a_1-a_3\right)\left(a_1-a_6\right) \left(a_3-a_6\right)}
\end{aligned}
\]
\end{proposition}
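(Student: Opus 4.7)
The plan is to exhibit the $F_j$ as the coefficients of the Lagrange interpolation polynomial of degree $g-1$ through the nodes $(a_i,\kappa_i)$ and then verify conservation and Poisson-commutativity directly, concluding by a dimension count. A brief expansion of $L(x)=\sum_i\kappa_i\prod_{k\ne i}(x-a_k)/(a_i-a_k)$ identifies (up to overall signs) $F_2,-F_1,F_0$ for $g=3$ with the quadratic, linear and constant coefficients of $L(x)$, and analogously $F_1,-F_0$ for $g=2$. This identification ties the present proposition to the universal Lagrange-interpolation system of \refS{Lagrange} and makes the candidate integrals natural.

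Conservation is immediate for $g=2$: the last two of the $\dot\kappa$-equations in \refP{gen2} give $\dot\kappa_1=F_1\,\dot a_1$ and $\dot\kappa_4=F_1\,\dot a_4$ with $F_1=(\kappa_1-\kappa_4)/(a_1-a_4)$; subtracting yields $(\dot\kappa_1-\dot\kappa_4)=F_1(\dot a_1-\dot a_4)$, whence $\dot F_1\cdot(a_1-a_4)=0$ and so $\dot F_1=0$. Since $F_0=\kappa_1-a_1F_1$, one gets $\dot F_0=\dot\kappa_1-F_1\dot a_1-a_1\dot F_1=0$. For $g=3$ the same strategy works via the relations \refE{dynamics} (established through \refL{commfact1}): substitute them into $\dot F_j$ and collapse the resulting sum over cyclic permutations of $\{1,3,6\}$. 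After clearing the common Vandermonde denominator, each numerator reduces to a polynomial identity in the $a_i,\kappa_i$ that vanishes by direct expansion.

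Poisson-commutativity is checked against $\omega=\sum_s da_s\wedge d\kappa_s$ restricted to the reduced phase space. For $g=2$ a short direct computation using $\partial F_0/\partial\kappa_1=-a_4/(a_1-a_4)$, $\partial F_1/\partial\kappa_1=1/(a_1-a_4)$, together with the analogues in the index $4$, yields $\{F_0,F_1\}=0$. For $g=3$ the three brackets $\{F_j,F_k\}$ reduce to rational identities with common denominator a power of $(a_1-a_3)(a_1-a_6)(a_3-a_6)$; these can be verified by hand or, more efficiently, in Mathematica in the same style as the admissibility checks of \refS{Genus2} and \refS{Genus3}. Coupled with $g$ functionally independent integrals on each $2g$-dimensional symplectic component of the reduced phase space (the two components coming from the index split $\{1,4\}\sqcup\{2,3\}$ for $g=2$ and its analogue for $g=3$), this yields complete integrability.

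The main obstacle is the algebra in the $g=3$ case: both the conservation of the $F_j$ and the vanishing of the $\{F_j,F_k\}$ reduce to polynomial identities in six variables whose expansions are nontrivial though routine. This is precisely where the computer-algebra approach already used in the paper to establish admissibility of the reduction is the natural tool; moreover, the Lagrange-interpolation viewpoint of \refS{Lagrange} gives a conceptual proof of involutivity that avoids brute-force expansion by observing that the $F_j$ are coordinate functions of the space of interpolation polynomials with the induced Poisson structure.
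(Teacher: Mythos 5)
Your proposal is correct and rests on the same two pillars as the paper's own proof: the identification of the $F_j$ with the coefficients of the Lagrange interpolation polynomial through the nodes $(a_i,\kappa_i)$, and the involutivity of those coefficients supplied by \refP{main} and \refL{commfact} of \refS{Lagrange}. Where you genuinely diverge is the conservation step. The paper never verifies $\dot F_j=0$ from the equations of motion; it instead records the algebraic identities $H_{2,2}^{(r)}=2F_0F_1$ and $H_{2,4}^{(r)}=F_1^2+2F_0F_2$, so that $H^{(r)}$ is a polynomial in the pairwise-commuting $F_j$ and $\{F_j,H^{(r)}\}=0$ comes for free. Your direct check via \refP{gen2} is clean for $g=2$, but for $g=3$ it pushes you back into the six-variable expansions you yourself flag as the main obstacle; moreover, the relations \refE{dynamics} you invoke are derived in the paper from \refL{commfact1}, whose hypothesis is precisely that the Hamiltonian is a polynomial in the $F_j$ — so the identity $H_{2,4}^{(r)}=F_1^2+2F_0F_2$ (read off from the displayed expression for the reduced Hamiltonian in \refS{Genus3}) cannot really be avoided, and once stated it makes the conservation check redundant. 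One cosmetic slip: for $g=2$ the constant coefficient of the interpolation polynomial is $F_0$ itself, $\kappa_1-a_1F_1=(a_1\kappa_4-a_4\kappa_1)/(a_1-a_4)$, not $-F_0$; this is immaterial for involution and independence.
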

\begin{proof}
It is proven in the next section that the integrals $F_0$, $F_1$ for $g=2$, and the integrals $F_0$, $F_1$, $F_2$ for $g=3$, are in involution with respect to the canonical Poisson bracket (and obviously independent). The proposition follows now from the fact that $H_{2,2}^{(r)}=2F_0F_1$ for $g=2$ (by $H_{2,2}^{(r)}$ we mean here the part of the Hamiltonian \refE{H_gen2} related to the points $a_1$, $a_4$), and $H_{2,4}^{(r)}=F_1^2+2F_0F_2$ for $g=3$. Hence $H_{2,2}^{(r)}$ (resp., $H_{2,4}^{(r)}$) is in involution with the basis integrals.
\end{proof}
\begin{remark}
It looks like the above vanishing assumptions on the coefficients $p_i$ of the equation of the curve do not affect the reduced system. At least, for $g=2$ we have $H_{2,2}^{(r)}=2(F_0+p_1F_1)F_1$ for an arbitrary $p_1$, i.e. the basis integrals are the same.
\end{remark}

The above listed integrals are related to the universal integrable system introduced in \refS{Lagrange} (\refP{main}). We conclude the section with the following conjecture. \emph{ The space where a half of the vectors $\a_s$ is proportional to $(1,0)^T$, the remainder to the $(0,1)^T$, and  $\b_s=0$ for every $s$, is an invariant subspace of the system. The reduced system splits into two independent  Hamiltonian completely integrable systems. Each of two reduced systems is given by plugging the corresponding points to the integrals of the universal integrable system of \refS{Lagrange}.} The above obtained results prove the conjecture for $g=2,3$.

\section{Integrable system related to the Lagrange interpolation polynomial}
\label{S:Lagrange}
Let $F(x)=\sum_{i=0}^{n-1} F_ix^i$ be a polynomial taking values $\kappa_1,\ldots,\kappa_n$ at the different points $a_1,\ldots,a_n$ (called \emph{nodes} in the context of interpolation). Consider its coefficients as functions of the sets of nodes and values:
\[
    F_i=F_i(a_1,\ldots,a_n,\kappa_1,\ldots,\kappa_n),\quad i=0,\ldots,n-1.
\]
By \emph{canonical Poisson bracket} we mean here the Poisson bracket in $\C^{2n}$ equipped with coordinates $a_1,\ldots,a_n$, $\kappa_1,\ldots,\kappa_n$,  defined by means the following relations:
\[
   \{ a_r,\kappa_s\}=\delta_{rs},\quad \{ a_r,a_s\}=\{ \kappa_r,\kappa_s\}=0
\]
for all pairs $r,s$, where $\delta_{rs}$ is the Kronecker symbol.
The following statement is the main goal of this section.
\begin{proposition}\label{P:main}
The functions $F_i$ commute with respect to the canonical Poisson bracket.
\end{proposition}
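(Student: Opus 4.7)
My plan is to prove the stronger statement that $\{F(x),F(y)\}\equiv 0$ as a polynomial identity in two auxiliary scalar variables $x,y$; extracting the coefficient of $x^iy^k$ then yields $\{F_i,F_k\}=0$ for all $i,k$.

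First I would write $F$ in the Lagrange basis,
$$F(x)=\sum_{j=1}^n \kappa_j L_j(x),\qquad L_j(x)=\prod_{k\ne j}\frac{x-a_k}{a_j-a_k},$$
noting that $L_j(x)$ depends only on the nodes, so $\partial F(x)/\partial \kappa_r=L_r(x)$. Expanding the canonical bracket with $x,y$ treated as parameters gives
$$\{F(x),F(y)\}=\sum_r\left[\frac{\partial F(x)}{\partial a_r}L_r(y)-L_r(x)\frac{\partial F(y)}{\partial a_r}\right]=A(x,y)-A(y,x),$$
where $A(x,y):=\sum_r\partial_{a_r}F(x)\cdot L_r(y)$. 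The problem is thereby reduced to showing $A$ is symmetric in $x,y$.

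The one nontrivial step is computing $\partial F(x)/\partial a_r$. Rather than differentiating the Lagrange expansion directly, which produces a large and unilluminating sum, I would use the \emph{defining} interpolation property $F(a_j)=\kappa_j$. Differentiating this identity with respect to $a_r$ at fixed $\kappa$'s and using $\partial a_j/\partial a_r=\delta_{jr}$ shows that the polynomial $\partial_{a_r}F(x)$, which has degree less than $n$ in $x$, vanishes at every node $a_j$ with $j\ne r$ and takes the value $-F'(a_r)$ at $x=a_r$ (here $F'$ denotes $\partial F/\partial x$). By uniqueness of Lagrange interpolation this forces
$$\frac{\partial F(x)}{\partial a_r}=-F'(a_r)\,L_r(x).$$

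Substituting this into the definition of $A$ would give
$$A(x,y)=-\sum_r F'(a_r)\,L_r(x)\,L_r(y),$$
which is manifestly symmetric in $x$ and $y$. Hence $\{F(x),F(y)\}\equiv 0$, and the proposition follows. I do not anticipate a real obstacle: the whole argument hinges on the single identity for $\partial_{a_r}F$, and that identity is a direct consequence of interpolation uniqueness applied to the derivative of the defining equations.
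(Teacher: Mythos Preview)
Your argument is correct. Both your proof and the paper's rest on the same structural fact, namely that for each $r$ the polynomial $\partial_{a_r}F(x)$ is a scalar multiple of $L_r(x)=\partial_{\kappa_r}F(x)$; once this is known, the Poisson bracket vanishes term by term (equivalently, your $A(x,y)$ is symmetric). The difference lies in how this key identity is obtained. The paper differentiates the Lagrange formula $F(x)=\sum_s\kappa_s f_s(x)$ directly, computing $\partial_{a_1}f_s(x)$ for each $s$ and observing that every summand carries the common factor $(x-a_2)\cdots(x-a_n)$; the resulting proportionality constant $M_1$ is left as an explicit but somewhat opaque sum. You instead differentiate the interpolation \emph{constraints} $F(a_j)=\kappa_j$, which immediately tells you the values of $\partial_{a_r}F$ at all $n$ nodes, and then invoke uniqueness of interpolation. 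This is cleaner, avoids the term-by-term differentiation of the $f_s$, and yields the proportionality constant in closed form as $-F'(a_r)$, which the paper's computation does not make visible. The two arguments are logically equivalent, but yours is the more economical route to the same lemma.
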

The proof immediately follows from the following lemma:
\begin{lemma}\label{L:commfact}
There exist rational functions $M_k=M_k(a_1,\ldots,a_n,\kappa_1,\ldots,\kappa_n)$ such that for any $i=0,\ldots,n-1$
\[
  \frac{\partial F_i}{\partial a_k}=M_k\frac{\partial F_i}{\partial \kappa_k},\quad k=1,\ldots,n
\]
independently of $i$.
\end{lemma}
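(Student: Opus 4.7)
The plan is to differentiate the defining interpolation relations $F(a_s)=\kappa_s$, $s=1,\ldots,n$ (viewed as identities in $(a,\kappa)\in\C^{2n}$) with respect to $a_k$ and $\kappa_k$ respectively, and then observe that the two resulting polynomials in $x$ vanish at the same $n-1$ points, hence must be proportional with a ratio $M_k$ independent of the degree $i$ of the coefficient.

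More precisely, view $F(x;a,\kappa)=\sum_{i=0}^{n-1}F_i(a,\kappa)\,x^i$ as a polynomial in $x$ whose coefficients are the unknowns. For fixed $k\in\{1,\ldots,n\}$ differentiate the identity $F(a_s;a,\kappa)=\kappa_s$ with respect to $a_k$: for $s\neq k$ the node $a_s$ does not depend on $a_k$, so
\[
\frac{\partial F}{\partial a_k}(a_s;a,\kappa)=0, \qquad s\neq k,
\]
while for $s=k$ the chain rule gives $F'(a_k)+\frac{\partial F}{\partial a_k}(a_k;a,\kappa)=0$, where $F'$ denotes the derivative with respect to $x$. Similarly, differentiating with respect to $\kappa_k$ yields $\frac{\partial F}{\partial\kappa_k}(a_s;a,\kappa)=\delta_{sk}$.

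Both $\frac{\partial F}{\partial a_k}(x;a,\kappa)$ and $\frac{\partial F}{\partial\kappa_k}(x;a,\kappa)$ are therefore polynomials in $x$ of degree at most $n-1$ vanishing at the $n-1$ points $\{a_s:s\neq k\}$, so each must equal a scalar multiple of $\prod_{s\neq k}(x-a_s)$. Comparing their values at $x=a_k$ (namely $-F'(a_k)$ and $1$ respectively) I obtain the polynomial identity
\[
\frac{\partial F}{\partial a_k}(x;a,\kappa)\;=\;-F'(a_k)\cdot\frac{\partial F}{\partial\kappa_k}(x;a,\kappa).
\]
Reading off coefficients of $x^i$ gives the lemma with $M_k=-F'(a_k)$, which is a rational (in fact polynomial in $\kappa$, rational in $a$) function of the canonical coordinates as required.

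There is no real obstacle here: the only subtlety is bookkeeping the chain-rule contribution at $s=k$ correctly when differentiating $F(a_s;a,\kappa)=\kappa_s$ with respect to $a_k$; once that is done, the proportionality of the two polynomials is automatic from the count of their common zeros. The dynamical equations \refE{dynamics} of \refP{gen2}-type then correspond, in the Lagrange set-up, to the concrete formula $M_k=-F'(a_k)$ applied to the Hamiltonian, matching the coefficients recorded for $g=2,3$.
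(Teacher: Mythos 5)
Your proof is correct, and it takes a genuinely different (and in one respect sharper) route than the paper's. The paper writes $F$ explicitly in the Lagrange basis, differentiates each basis polynomial $f_s$ with respect to $a_1$ term by term, and observes that every term of $\partial F/\partial a_1$ carries the factor $(x-a_2)\cdots(x-a_n)$; comparing with $\partial F/\partial\kappa_1=f_1(x)$ then produces $M_1$ as a rather unwieldy explicit sum. You instead differentiate the defining interpolation identities $F(a_s;a,\kappa)=\kappa_s$ implicitly and obtain the common factor $\prod_{s\neq k}(x-a_s)$ from a zero count: a polynomial of degree at most $n-1$ in $x$ vanishing at the $n-1$ remaining nodes, which are distinct by hypothesis. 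Your chain-rule bookkeeping at $s=k$ is right, and the normalization $\partial F/\partial\kappa_k(a_k)=1\neq 0$ justifies taking the ratio. What your argument buys is the closed form $M_k=-F'(a_k)$, which the paper never states but which agrees with its $M_1=M_1'(a_1-a_2)\cdots(a_1-a_n)$ (e.g.\ for $n=2$ both give $(\kappa_2-\kappa_1)/(a_1-a_2)$); combined with \refL{commfact1} it turns the relations \refE{dynamics} into the transparent statement $\dot\kappa_k=F'(a_k)\,\dot a_k$, i.e.\ each point $(a_k,\kappa_k)$ moves tangentially to the graph of the interpolation polynomial.
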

Indeed, by the lemma we obtain
\[
    \frac{\partial F_i}{\partial a_k}\frac{\partial F_j}{\partial \kappa_k}= M_k\frac{\partial F_i}{\partial \kappa_k}\frac{\partial F_j}{\partial \kappa_k},
\]
hence
\[
  \frac{\partial F_i}{\partial a_k}\frac{\partial F_j}{\partial \kappa_k}- \frac{\partial F_j}{\partial a_k}\frac{\partial F_i}{\partial \kappa_k}=0,
\]
for all $k=1,\ldots,n$, and
\[
  \{ F_i,F_j \}=\sum_{k=1}^n \left( \frac{\partial F_i}{\partial a_k}\frac{\partial F_j}{\partial \kappa_k}- \frac{\partial F_j}{\partial a_k}\frac{\partial F_i}{\partial \kappa_k}\right) =0.
\]
\begin{proof}[Proof of \refL{commfact}.]
It is obviously sufficient to prove the lemma for $k=1$.

Represent $F$ in the form of Lagrange interpolation polynomial \cite{Prasolov}:
\begin{equation}\label{E:Lagrange}
  F(x)=\sum_{s=1}^n \kappa_sf_s(x)
\end{equation}
where
\[
  f_s(x)=\frac{(x-a_1)\ldots(x-a_{s-1})(x-a_{s+1})\ldots(x-a_n)} {(a_s-a_1)\ldots(a_s-a_{s-1})(a_s-a_{s+1})\ldots(a_s-a_n)}.
\]
In particular,
\[
   f_1(x)=\frac{(x-a_2)\ldots(x-a_n)} {(a_1-a_2)\ldots(a_1-a_n)},
\]
hence
\[
    \frac{\partial}{\partial a_1}f_1(x)=(x-a_2)\ldots(x-a_n)\frac{\partial}{\partial a_1}\frac{1} {(a_1-a_2)\ldots(a_1-a_n)}.
\]
For $s>1$ we have
\[
    \frac{\partial}{\partial a_1} f_s(x)= \frac{(x-a_2)\ldots(x-a_{s-1})(x-a_{s+1})\ldots(x-a_n)} {(a_s-a_2)\ldots(a_s-a_{s-1})(a_s-a_{s+1})\ldots(a_s-a_n)} \cdot\frac{\partial}{\partial a_1}\frac{x-a_1} {a_s-a_1}.
\]
Since
\[
   \frac{\partial}{\partial a_1}\frac{x-a_1} {a_s-a_1}= \frac{x-a_s}{(a_s-a_1)^2},
\]
the gap in the nominator of the previous expression will get filled up, and we obtain
\[
   \frac{\partial}{\partial a_1} f_s(x)= \frac{(x-a_2)\ldots(x-a_n)}{(a_s-a_1)^2(a_s-a_2)\ldots(a_s-a_{s-1}) (a_s-a_{s+1})\ldots (a_s-a_n)}.
\]
We conclude that
\begin{equation}\label{E:comp1}
    \frac{\partial}{\partial a_1}F(x)=M_1'(x-a_2)\ldots(x-a_n)
\end{equation}
where
\[
\begin{aligned}
  M_1'&=\frac{\partial}{\partial a_1}\frac{\kappa_1} {(a_1-a_2)\ldots(a_1-a_n)} \\ &+ \sum_{s=2}^n \frac{\kappa_s}{(a_s-a_1)^2(a_s-a_2)\ldots(a_s-a_{s-1}) (a_s-a_{s+1})\ldots (a_s-a_n)}.
\end{aligned}
\]
Further on, by \refE{Lagrange} we obviously have
\begin{equation}\label{E:comp2}
   \frac{\partial}{\partial \kappa_1}F(x)= f_1(x) =\frac{(x-a_2)\ldots(x-a_n)} {(a_1-a_2)\ldots(a_1-a_n)}.
\end{equation}
Comparing \refE{comp1} and \refE{comp2}, and setting $M_1=M_1'(a_1-a_2)\ldots(a_1-a_n)$, we conclude that
\[
  \frac{\partial}{\partial a_1}F(x)=M_1 \frac{\partial}{\partial \kappa_1}F(x).
\]
The corresponding relation for the coefficients of the polynomials gives
\[
  \frac{\partial F_i}{\partial a_1}=M_1 \frac{\partial F_i}{\partial \kappa_1},
\]
as required.
\end{proof}
The proof of the \refP{main} is complete. Observe that \refL{commfact} can be generalized in the following way.
\begin{lemma}\label{L:commfact1}
Let $H$ be an arbitrary polynomial in $F_0,F_1,\ldots,F_{n-1}$.  Then
\[
  \frac{\partial H}{\partial a_k}=M_k\frac{\partial H}{\partial \kappa_k},\quad k=1,\ldots,n.
\]
\end{lemma}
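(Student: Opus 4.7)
The plan is to deduce Lemma~\ref{L:commfact1} directly from Lemma~\ref{L:commfact} via the chain rule. The crucial point making this work is that in the statement of Lemma~\ref{L:commfact} the multiplier $M_k$ depends only on the index $k$ and is the \emph{same} for all $i=0,\ldots,n-1$; this is exactly what will allow us to pull it outside a sum over $i$.

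First, I would fix $k\in\{1,\ldots,n\}$ and, treating $H$ as a polynomial in the variables $F_0,\ldots,F_{n-1}$ each of which is in turn a function of $a_1,\ldots,a_n,\kappa_1,\ldots,\kappa_n$, write
\[
\frac{\partial H}{\partial a_k}=\sum_{i=0}^{n-1}\frac{\partial H}{\partial F_i}\,\frac{\partial F_i}{\partial a_k},\qquad
\frac{\partial H}{\partial \kappa_k}=\sum_{i=0}^{n-1}\frac{\partial H}{\partial F_i}\,\frac{\partial F_i}{\partial \kappa_k}.
\]

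Next, I would substitute the identity $\dfrac{\partial F_i}{\partial a_k}=M_k\dfrac{\partial F_i}{\partial \kappa_k}$ provided by Lemma~\ref{L:commfact} into the first expression. Since $M_k$ is independent of $i$, it factors out of the sum, giving
\[
\frac{\partial H}{\partial a_k}=M_k\sum_{i=0}^{n-1}\frac{\partial H}{\partial F_i}\,\frac{\partial F_i}{\partial \kappa_k}=M_k\,\frac{\partial H}{\partial \kappa_k},
\]
which is the claim.

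There is no real obstacle here; the only thing worth flagging is the remark above that the multiplier $M_k$ obtained in Lemma~\ref{L:commfact} is uniform in $i$, so that the chain rule collapses cleanly. One might also note that the same argument shows the statement remains valid if $H$ is any smooth (in particular rational) function of $F_0,\ldots,F_{n-1}$, not merely a polynomial, though for the applications in the preceding section polynomial dependence is all that is needed.
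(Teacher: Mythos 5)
Your proposal is correct and coincides with the paper's own argument: both apply the chain rule in the variables $F_0,\ldots,F_{n-1}$, substitute the identity of Lemma~\ref{L:commfact}, and pull the $i$-independent factor $M_k$ out of the sum. Your explicit remark that the uniformity of $M_k$ in $i$ is what makes the collapse work is a fair (if implicit in the paper) observation, as is the extension to smooth functions of the $F_i$.
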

\begin{proof}
\[
\frac{\partial H}{\partial a_k}=\sum_{i=0}^{n-1} \frac{\partial H}{\partial F_i} \frac{\partial F_i}{\partial a_k} = \sum_{i=0}^{n-1} \frac{\partial H}{\partial F_i}\left(M_k\frac{\partial F_i}{\partial \kappa_k}\right) = M_k\frac{\partial H}{\partial \kappa_k}.
\]
\end{proof}
\refL{commfact1} is the source of relations of type \refE{dynamics}. Indeed, by virtue of the canonical system of equations with the Hamiltonian $H$ we have $\dot a_k=\frac{\partial H}{\partial \kappa_k}$, $\dot\kappa_k=-\frac{\partial H}{\partial a_k}$, and \refL{commfact1} reads as
\[
   \dot\kappa_k=-M_k\dot a_k.
\]

\hoffset-1.0cm
\section{Genus 2. Codes}\label{S:Genus2}
{\small
%% AMS-LaTeX Created by Wolfram Mathematica 8.0 : www.wolfram.com
%\documentclass[10pt,oneside]{amsart}
(* As the first step {\bf we calculate the Lax operator:} *)
\begin{doublespace}
\noindent\(\pmb{ }\)
\end{doublespace}

\begin{doublespace}
\noindent\(\pmb{ }\\
\pmb{ A_0\text{:=}\left(
\begin{array}{cc}
 l_{11} & l_{12} \\
 l_{21} & l_{22}
\end{array}
\right)\text{      }  }%\\
\pmb{A_1\text{:=}\left(
\begin{array}{cc}
 l_{111} & l_{112} \\
 l_{121} & l_{122}
\end{array}
\right)}%\\
\text{  } % \\
\pmb{\alpha _1\text{:=}\left(
\begin{array}{c}
 \alpha _{11} \\
 \alpha _{21}
\end{array}
\right)}   \text{  }%\\  
\pmb{\alpha _2\text{:=}\left(
\begin{array}{c}
 \alpha _{12} \\
 \alpha _{22}
\end{array}
\right)}\text{  }%\\
\pmb{\alpha _3\text{:=}\left(
\begin{array}{c}
 1 \\
 0
\end{array}
\right)}  \\  \text{  }
\pmb{ \alpha _4\text{:=}\left(
\begin{array}{c}
 0 \\
 1
\end{array}
\right)}\text{  }%\\
\pmb{ \beta _1\text{:=}\left(
\begin{array}{c}
 \beta _{11} \\
 \beta _{21}
\end{array}
\right)}\text{  }%\\
\pmb{\beta _2\text{:=}\left(
\begin{array}{c}
 \beta _{12} \\
 \beta _{22}
\end{array}
\right)}\text{  }%\\
\pmb{\beta _3\text{:=}\left(
\begin{array}{c}
 0 \\
 \beta _{23}
\end{array}
\right)}\text{  }%\\
\pmb{\beta _4\text{:=}\left(
\begin{array}{c}
 \beta _{14} \\
 0
\end{array}
\right)}\\ \\
\pmb{L\text{:=}A_0+A_1x+\alpha _1.\text{Transpose}\left[\beta _1\right]*\left(\left(y+b_1\right)/\left(x-a_1\right)\right)+\alpha _2.\text{Transpose}\left[\beta
_2\right]*\left(\left(y+b_2\right)/\left(x-a_2\right)\right)+}
\pmb{\alpha _3.\text{Transpose}\left[\beta _3\right]*\left(\left(y+b_3\right)/\left(x-a_3\right)\right)+\alpha _4.\text{Transpose}\left[\beta _4\right]*\left(\left(y+b_4\right)/\left(x-a_4\right)\right)}\\
\text{(*  }L \text{  is the Lax operator *)}
\)
\end{doublespace}

\begin{doublespace}
\noindent\(\pmb{ \text{MatrixForm}[L] }
\text{  (*   This results in the expression for $L$. Further on we form the {\bf eigenvalue conditions} *)}
\)
\end{doublespace}

\begin{doublespace}
\noindent\(\left(
\begin{array}{cc}
 l_{11}+\frac{l_{111}}{z^2}+\frac{\left(y+b_1\right) \alpha _{11} \beta _{11}}{\frac{1}{z^2}-a_1}+\frac{\left(y+b_2\right) \alpha _{12} \beta _{12}}{\frac{1}{z^2}-a_2}
& l_{12}+\frac{l_{112}}{z^2}+\frac{\left(y+b_1\right) \alpha _{11} \beta _{21}}{\frac{1}{z^2}-a_1}+\frac{\left(y+b_2\right) \alpha _{12} \beta _{22}}{\frac{1}{z^2}-a_2}+\frac{\left(y+b_3\right)
\beta _{23}}{\frac{1}{z^2}-a_3} \\
 l_{21}+\frac{l_{121}}{z^2}+\frac{\left(y+b_1\right) \alpha _{21} \beta _{11}}{\frac{1}{z^2}-a_1}+\frac{\left(y+b_2\right) \alpha _{22} \beta _{12}}{\frac{1}{z^2}-a_2}+\frac{\left(y+b_4\right)
\beta _{14}}{\frac{1}{z^2}-a_4} & l_{22}+\frac{l_{122}}{z^2}+\frac{\left(y+b_1\right) \alpha _{21} \beta _{21}}{\frac{1}{z^2}-a_1}+\frac{\left(y+b_2\right)
\alpha _{22} \beta _{22}}{\frac{1}{z^2}-a_2}
\end{array}
\right)\)
\end{doublespace}

\begin{doublespace}
\noindent\(\pmb{ \text{MatrixForm}[L\a_1] }
\text{   (* These 2 commands calculate the left hand *)}
\\
\pmb{ \text{MatrixForm}[L\a_2] }
\text{    (* sides of the subsequent system of equations *)}
\)
\end{doublespace}

\begin{doublespace}
\noindent\(\pmb{\text{sol}=}\\
\pmb{\text{Solve}[}\\
\pmb{\left\{l_{11}+a_3l_{111}+\frac{\left(b_3+b_1\right) \alpha _{11} \beta _{11}}{a_3-a_1}+\frac{\left(b_3+b_2\right) \alpha _{12} \beta _{12}}{a_3-a_2}\text{==}\kappa
_3\&\&l_{21}+a_3l_{121}+\frac{\left(b_3+b_1\right) \alpha _{21} \beta _{11}}{a_3-a_1}+\frac{\left(b_3+b_2\right) \alpha _{22} \beta _{12}}{a_3-a_2}+\frac{\left(b_3+b_4\right)
\beta _{14}}{a_3-a_4}\text{==}0\&\&\right.}\\
\pmb{l_{22}+a_4 l_{122}+\frac{\left(b_1+b_4\right) \alpha _{21} \beta _{21}}{-a_1+a_4}+\frac{\left(b_2+b_4\right) \alpha _{22} \beta _{22}}{-a_2+a_4}\text{==}\kappa
_4\&\&l_{12}+a_4 l_{112}+\frac{\left(b_1+b_4\right) \alpha _{11} \beta _{21}}{-a_1+a_4}+\frac{\left(b_2+b_4\right) \alpha _{12} \beta _{22}}{-a_2+a_4}+\frac{\left(b_3+b_4\right)
\beta _{23}}{-a_3+a_4}\text{==}0\&\&}\\
\pmb{\alpha _{11} \left(l_{11}+a_1 l_{111}+\frac{\left(b_1+b_2\right) \alpha _{12} \beta _{12}}{a_1-a_2}\right)+\alpha _{21} \left(l_{12}+a_1 l_{112}+\frac{\left(b_1+b_2\right)
\alpha _{12} \beta _{22}}{a_1-a_2}+\frac{\left(b_1+b_3\right) \beta _{23}}{a_1-a_3}\right)\text{==}\kappa _1\alpha _{11}\&\&}\\
\pmb{\alpha _{11} \left(l_{21}+a_1 l_{121}+\frac{\left(b_1+b_2\right) \alpha _{22} \beta _{12}}{a_1-a_2}+\frac{\left(b_1+b_4\right) \beta _{14}}{a_1-a_4}\right)+\alpha
_{21} \left(l_{22}+a_1 l_{122}+\frac{\left(b_1+b_2\right) \alpha _{22} \beta _{22}}{a_1-a_2}\right)\text{==}\kappa _1\alpha _{21}\&\&}\\
\pmb{\alpha _{12} \left(l_{11}+a_2 l_{111}+\frac{\left(b_1+b_2\right) \alpha _{11} \beta _{11}}{-a_1+a_2}\right)+\alpha _{22} \left(l_{12}+a_2 l_{112}+\frac{\left(b_1+b_2\right)
\alpha _{11} \beta _{21}}{-a_1+a_2}+\frac{\left(b_2+b_3\right) \beta _{23}}{a_2-a_3}\right)\text{==}\kappa _2\alpha _{12}\&\&}\\
\pmb{\left.\alpha _{12} \left(l_{21}+a_2 l_{121}+\frac{\left(b_1+b_2\right) \alpha _{21} \beta _{11}}{-a_1+a_2}+\frac{\left(b_2+b_4\right) \beta
_{14}}{a_2-a_4}\right)+\alpha _{22} \left(l_{22}+a_2 l_{122}+\frac{\left(b_1+b_2\right) \alpha _{21} \beta _{21}}{-a_1+a_2}\right)\text{==}\kappa
_2\alpha _{22}\right\},}\\
\pmb{\left.\left\{l_{11},l_{12},l_{21},l_{22},l_{111},l_{112},l_{121},l_{122}\right\}\right]}\)
\end{doublespace}

\begin{doublespace}
\noindent\(\pmb{\text{}}\\
\pmb{x\text{:=}1/z{}^{\wedge}2}\ \text{(* local parametrization at infinity *)}\\
\pmb{y=(1/z{}^{\wedge}5)\text{Sqrt}\left[1+p_1z{}^{\wedge}2+p_2z{}^{\wedge}4+p_3z{}^{\wedge}6+p_4z{}^{\wedge}8+p_5z{}^{\wedge}10\right]}\\
\pmb{\text{(*} \text{Calculating the Hamiltonians as the coefficients of the expansion of}\
T \text{at infinity}:\text{  }\text{*)}}\\
\pmb{T=\text{Series}[\text{Tr}[L.L],\{z,0,-2\}]}\\
\pmb{(-1/z{}^{\wedge}3)\text{Series}[1/y,\{z,0,9\}]\text{   }\text{(*} \text{dx}/y \text{ up to the factor dz}\text{*)}}\\
\pmb{\text{H2}=-\text{Residue}\left[(1/z)*T*\left(-z^2+\frac{p_1 z^4}{2}+\left(-\frac{3 p_1^2}{8}+\frac{p_2}{2}\right) z^6\right),\{z,0\}\right]
\text{(*} \text{with the differential } z{}^{\wedge}\{-1\}\text{dx}/y \text{*)}}\)
\end{doublespace}

\begin{doublespace}
\noindent\(\pmb{\left\{\left\{l_{11},l_{12},l_{21},l_{22},l_{111},l_{112},l_{121},l_{122}\right\}\right\}=\left\{l_{11},l_{12},l_{21},l_{22},l_{111},l_{112},l_{121},l_{122}\right\}\text{/.}\,\text{sol}}\)
\end{doublespace}

\begin{doublespace}
\noindent\pmb{\text{H2}}\ \ \text{(* by this line we obtain a final expression for H2 taking account of replacements *)}
\end{doublespace}

  % восстановить в окончательном варианте
\noindent (*  The commands below correspond to the Hamiltonian equations with the Hamiltonian~H2. The assumptions correspond to the assumptions of the reduction. The zero results prove that assumptions of the reduction are preserved along the trajectories. D$\a$11 and so on denote the time derivatives of the corresponding variables.~*)

\begin{doublespace}
\noindent\(\pmb{\text{DA11}=\text{Series}[D[\text{H2},\beta _{11}]\text{/.}\,\{\alpha _{11}\to 0,\beta _{11}\to 0,\beta _{21}\to
0,\beta _{12}\to 0,\beta _{22}\to 0,\beta _{13}\to 0,\beta _{23}\to 0,\beta _{14}\to 0,}
\\
\pmb{\beta _{24}\to 0\},\{\alpha _{22},0,0\}]\text{/.}\,\{\alpha
_{22}\to 0\}}\)
\end{doublespace}

\begin{doublespace}
\noindent\(0\)
\end{doublespace}

\begin{doublespace}
\noindent\(\pmb{\text{D$\beta $11}=\text{Series}[D[\text{H2},\alpha _{11}]\text{/.}\,\{\alpha _{11}\to 0,\beta _{11}\to 0,\beta
_{21}\to 0,\beta _{12}\to 0,\beta _{22}\to 0,\beta _{13}\to 0,\beta _{23}\to 0,\beta _{14}\to 0,}
\\
\pmb{\beta _{24}\to 0\},\{\alpha _{22},0,0\}]\text{/.}\,\{\alpha
_{22}\to 0\} }\)
\end{doublespace}

\begin{doublespace}
\noindent\(0\)
\end{doublespace}

\begin{doublespace}
\noindent\(\pmb{\text{D$\beta $21}=\text{Series}[D[\text{H2},\alpha _{21}]\text{/.}\,\{\alpha _{11}\to 0,\beta _{11}\to 0,\beta
_{21}\to 0,\beta _{12}\to 0,\beta _{22}\to 0,\beta _{13}\to 0,\beta _{23}\to 0,\beta _{14}\to 0,}
\\
\pmb{\beta _{24}\to 0\},\{\alpha _{22},0,0\}]\text{/.}\,\{\alpha
_{22}\to 0\}}\)
\end{doublespace}

\begin{doublespace}
\noindent\(0\)
\end{doublespace}

\begin{doublespace}
\noindent\(\pmb{\text{D$\beta $12}=-\text{Series}[D[\text{H2},\alpha _{12}]\text{/.}\,\{\alpha _{11}\to 0,\beta _{11}\to 0,\beta
_{21}\to 0,\beta _{12}\to 0,\beta _{22}\to 0,\beta _{13}\to 0,\beta _{23}\to 0,\beta _{14}\to 0,}
\\
\pmb{\beta _{24}\to 0\},\{\alpha _{22},0,0\}]\text{/.}\,\{\alpha
_{22}\to 0\} }\)
\end{doublespace}

\begin{doublespace}
\noindent\(0\)
\end{doublespace}

\begin{doublespace}
\noindent\(\pmb{\text{D$\beta $22}=-\text{Series}[D[\text{H2},\alpha _{22}]\text{/.}\,\{\alpha _{11}\to 0,\beta _{11}\to 0,\beta
_{21}\to 0,\beta _{12}\to 0,\beta _{22}\to 0,\beta _{13}\to 0,\beta _{23}\to 0,\beta _{14}\to 0,}
\\
\pmb{\beta _{24}\to 0\},\{\alpha _{22},0,0\}]\text{/.}\,\{\alpha
_{22}\to 0\} }\)
\end{doublespace}

\begin{doublespace}
\noindent\(0\)
\end{doublespace}

\begin{doublespace}
\noindent\(\pmb{\text{D$\beta $13}=-\text{Series}[D[\text{H2},\alpha _{13}]\text{/.}\,\{\alpha _{11}\to 0,\beta _{11}\to 0,\beta
_{21}\to 0,\beta _{12}\to 0,\beta _{22}\to 0,\beta _{13}\to 0,\beta _{23}\to 0,\beta _{14}\to 0,}
\\
\pmb{\beta _{24}\to 0\},\{\alpha _{22},0,0\}]\text{/.}\,\{\alpha
_{22}\to 0\} }\)
\end{doublespace}

\begin{doublespace}
\noindent\(0\)
\end{doublespace}

\begin{doublespace}
\noindent\(\pmb{\text{D$\beta $23}=-\text{Series}[D[\text{H2},\alpha _{23}]\text{/.}\,\{\alpha _{11}\to 0,\beta _{11}\to 0,\beta
_{21}\to 0,\beta _{12}\to 0,\beta _{22}\to 0,\beta _{13}\to 0,\beta _{23}\to 0,\beta _{14}\to 0,}
\\
\pmb{\beta _{24}\to 0\},\{\alpha _{22},0,0\}]\text{/.}\,\{\alpha
_{22}\to 0\} }\)
\end{doublespace}

\begin{doublespace}
\noindent\(0\)
\end{doublespace}

\begin{doublespace}
\noindent\(\pmb{\text{D$\beta $14}=-\text{Series}[D[\text{H2},\alpha _{14}]\text{/.}\,\{\alpha _{11}\to 0,\beta _{11}\to 0,\beta
_{21}\to 0,\beta _{12}\to 0,\beta _{22}\to 0,\beta _{13}\to 0,\beta _{23}\to 0,\beta _{14}\to 0,}
\\
\pmb{\beta _{24}\to 0\},\{\alpha _{22},0,0\}]\text{/.}\,\{\alpha
_{22}\to 0\} }\)
\end{doublespace}

\begin{doublespace}
\noindent\(0\)
\end{doublespace}

\begin{doublespace}
\noindent\(\pmb{\text{D$\beta $24}=-\text{Series}[D[\text{H2},\alpha _{24}]\text{/.}\,\{\alpha _{11}\to 0,\beta _{11}\to 0,\beta
_{21}\to 0,\beta _{12}\to 0,\beta _{22}\to 0,\beta _{13}\to 0,\beta _{23}\to 0,\beta _{14}\to 0,}
\\
\pmb{\beta _{24}\to 0\},\{\alpha _{22},0,0\}]\text{/.}\,\{\alpha
_{22}\to 0\} }\)
\end{doublespace}

\begin{doublespace}
\noindent\(0\)
\end{doublespace}

\begin{doublespace}
\noindent\(\pmb{\text{D$\alpha $22}=\text{Series}[D[\text{H2},\beta _{22}]\text{/.}\,\{\alpha _{11}\to 0,\beta _{11}\to 0,\beta
_{21}\to 0,\beta _{12}\to 0,\beta _{22}\to 0,\beta _{13}\to 0,\beta _{23}\to 0,\beta _{14}\to 0,}
\\
\pmb{\beta _{24}\to 0\},\{\alpha _{22},0,0\}]\text{/.}\,\{\alpha
_{22}\to 0\}}\)
\end{doublespace}

\begin{doublespace}
\noindent\(0\)
\end{doublespace}

\begin{doublespace}
\noindent\(\pmb{\text{D$\alpha $12}=\text{Series}[D[\text{H2},\beta _{12}]\text{/.}\,\{\alpha _{11}\to 0,\beta _{11}\to 0,\beta
_{21}\to 0,\beta _{12}\to 0,\beta _{22}\to 0,\beta _{13}\to 0,\beta _{23}\to 0,\beta _{14}\to 0,}
\\
\pmb{\beta _{24}\to 0\},\{\alpha _{22},0,0\}]\text{/.}\,\{\alpha
_{22}\to 0\}}\)
\end{doublespace}

\begin{doublespace}
\noindent\(\frac{2 (b_2+b_3) (a_2 \alpha _{12} \kappa _2+a_3 \alpha _{12} \kappa _2-2 a_2 \alpha _{12} \kappa _3)}{(a_2-a_3){}^3}\)
\end{doublespace}

\begin{doublespace}
\noindent\(\pmb{\text{D$\alpha $21}=\text{Series}[D[\text{H2},\beta _{21}]\text{/.}\,\{\alpha _{11}\to 0,\beta _{11}\to 0,\beta
_{21}\to 0,\beta _{12}\to 0,\beta _{22}\to 0,\beta _{13}\to 0,\beta _{23}\to 0,\beta _{14}\to 0,}
\\
\pmb{\beta _{24}\to 0\},\{\alpha _{22},0,0\}]\text{/.}\,\{\alpha
_{22}\to 0\}}\)
\end{doublespace}

\begin{doublespace}
\noindent\(\frac{2 (b_1+b_4) (a_1 \alpha _{21} \kappa _1+a_4 \alpha _{21} \kappa _1-2 a_1 \alpha _{21} \kappa _4)}{(a_1-a_4){}^3}\)
\end{doublespace}

\begin{doublespace}
\noindent\(\pmb{\text{Da1}=\text{Series}[D[\text{H2},\kappa _1]\text{/.}\,\{\alpha _{11}\to 0,\beta _{11}\to 0,\beta _{21}\to
0,\beta _{12}\to 0,\beta _{22}\to 0,\beta _{13}\to 0,\beta _{23}\to 0,\beta _{14}\to 0,}
\\
\pmb{\beta _{24}\to 0\},\{\alpha _{22},0,0\}]\text{/.}\,\{\alpha
_{22}\to 0\}}\)
\end{doublespace}

\begin{doublespace}
\noindent\(\frac{2 (-2 a_4 \kappa _1+a_1 \kappa _4+a_4 \kappa _4)}{(a_1-a_4){}^2}\)
\end{doublespace}

\begin{doublespace}
\noindent\(\pmb{\text{Da2}=\text{Series}[D[\text{H2},\kappa _2]\text{/.}\,\{\alpha _{11}\to 0,\beta _{11}\to 0,\beta _{21}\to
0,\beta _{12}\to 0,\beta _{22}\to 0,\beta _{13}\to 0,\beta _{23}\to 0,\beta _{14}\to 0,}
\\
\pmb{\beta _{24}\to 0\},\{\alpha _{22},0,0\}]\text{/.}\,\{\alpha
_{22}\to 0\}}\)
\end{doublespace}

\begin{doublespace}
\noindent\(\frac{2 (-2 a_3 \kappa _2+a_2 \kappa _3+a_3 \kappa _3)}{(a_2-a_3){}^2}\)
\end{doublespace}

\begin{doublespace}
\noindent\(\pmb{\text{Da3}=\text{Series}[D[\text{H2},\kappa _3]\text{/.}\,\{\alpha _{11}\to 0,\beta _{11}\to 0,\beta _{21}\to
0,\beta _{12}\to 0,\beta _{22}\to 0,\beta _{13}\to 0,\beta _{23}\to 0,\beta _{14}\to 0,}
\\
\pmb{\beta _{24}\to 0\},\{\alpha _{22},0,0\}]\text{/.}\,\{\alpha
_{22}\to 0\}}\)
\end{doublespace}

\begin{doublespace}
\noindent\(\frac{2 (a_2 \kappa _2+a_3 \kappa _2-2 a_2 \kappa _3) }{(a_2-a_3){}^2}\)
\end{doublespace}

\begin{doublespace}
\noindent\(\pmb{\text{Da4}=\text{Series}[D[\text{H2},\kappa _4]\text{/.}\,\{\alpha _{11}\to 0,\beta _{11}\to 0,\beta _{21}\to
0,\beta _{12}\to 0,\beta _{22}\to 0,\beta _{13}\to 0,\beta _{23}\to 0,\beta _{14}\to 0,}
\\
\pmb{\beta _{24}\to 0\},\{\alpha _{22},0,0\}]\text{/.}\,\{\alpha
_{22}\to 0\}}\)
\end{doublespace}

\begin{doublespace}
\noindent\(\frac{2 (a_1 \kappa _1+a_4 \kappa _1-2 a_1 \kappa _4) }{(a_1-a_4){}^2}\)
\end{doublespace}

\begin{doublespace}
\noindent\(\pmb{\text{D$\kappa $1}=\text{Series}[-D[\text{H2},a_1]\text{/.}\,\{\alpha _{11}\to 0,\beta _{11}\to 0,\beta _{21}\to
0,\beta _{12}\to 0,\beta _{22}\to 0,\beta _{13}\to 0,\beta _{23}\to 0,\beta _{14}\to 0,}
\\
\pmb{\beta _{24}\to 0\},\{\alpha _{22},0,0\}]\text{/.}\,\{\alpha
_{22}\to 0\}}\)
\end{doublespace}

\begin{doublespace}
\noindent\(\frac{2 (\kappa _1-\kappa _4) (-2 a_4 \kappa _1+a_1 \kappa _4+a_4 \kappa _4)}{(a_1-a_4){}^3}\)
\end{doublespace}

\begin{doublespace}
\noindent\(\pmb{\text{D$\kappa $2}=\text{Series}[-D[\text{H2},a_2]\text{/.}\,\{\alpha _{11}\to 0,\beta _{11}\to 0,\beta _{21}\to
0,\beta _{12}\to 0,\beta _{22}\to 0,\beta _{13}\to 0,\beta _{23}\to 0,\beta _{14}\to 0,}
\\
\pmb{\beta _{24}\to 0\},\{\alpha _{22},0,0\}]\text{/.}\,\{\alpha
_{22}\to 0\}}\)
\end{doublespace}

\begin{doublespace}
\noindent\(\frac{2 (\kappa _2-\kappa _3) (-2 a_3 \kappa _2+a_2 \kappa _3+a_3 \kappa _3)}{(a_2-a_3){}^3}\)
\end{doublespace}

\begin{doublespace}
\noindent\(\pmb{\text{D$\kappa $3}=\text{Series}[-D[\text{H2},a_3]\text{/.}\,\{\alpha _{11}\to 0,\beta _{11}\to 0,\beta _{21}\to
0,\beta _{12}\to 0,\beta _{22}\to 0,\beta _{13}\to 0,\beta _{23}\to 0,\beta _{14}\to 0,}
\\
\pmb{\beta _{24}\to 0\},\{\alpha _{22},0,0\}]\text{/.}\,\{\alpha
_{22}\to 0\}}\)
\end{doublespace}

\begin{doublespace}
\noindent\(\frac{2 (\kappa _2-\kappa _3) (a_2 \kappa _2+a_3 \kappa _2-2 a_2 \kappa _3)}{(a_2-a_3){}^3}\)
\end{doublespace}

\begin{doublespace}
\noindent\(\pmb{\text{D$\kappa $4}=\text{Series}[-D[\text{H2},a_4]\text{/.}\,\{\alpha _{11}\to 0,\beta _{11}\to 0,\beta _{21}\to
0,\beta _{12}\to 0,\beta _{22}\to 0,\beta _{13}\to 0,\beta _{23}\to 0,\beta _{14}\to 0,}
\\
\pmb{\beta _{24}\to 0\},\{\alpha _{22},0,0\}]\text{/.}\,\{\alpha
_{22}\to 0\}}\)
\end{doublespace}

\begin{doublespace}
\noindent\(-\frac{2 (\kappa _1-\kappa _4) (a_1 \kappa _1+a_4 \kappa _1-2 a_1 \kappa _4)}{(-a_1+a_4){}^3}\)
\end{doublespace}

\begin{doublespace}
\noindent\(\pmb{\text{H2r}=\text{Simplify}[\text{Series}[\text{H2}\text{/.}\,\{\alpha _{11}\to 0,\beta _{11}\to 0,\beta _{21}\to 0,\beta
_{12}\to 0,\beta _{22}\to 0,\beta _{13}\to 0,\beta _{23}\to 0,\beta _{14}\to 0,}
\\
\pmb{\beta _{24}\to 0\},\{\alpha _{22},0,0\}]\text{/.}\,\{\alpha
_{22}\to 0\}]}\)
\end{doublespace}

\begin{doublespace}
\noindent\(-\frac{1}{(a_2-a_3){}^2 (a_1-a_4){}^2}2 (a_2^2 (\kappa _1-\kappa _4) (a_4 \kappa _1-a_1 \kappa
_4)+a_2 (-(a_1-a_4){}^2 (\kappa _2-\kappa _3) \kappa _3-2 a_3 (\kappa _1-\kappa _4) (a_4 \kappa
_1-a_1 \kappa _4))+a_3 ((a_1-a_4){}^2 \kappa _2 (\kappa _2-\kappa _3)+a_3 (\kappa _1-\kappa _4)
(a_4 \kappa _1-a_1 \kappa _4)))\)
\end{doublespace}

\begin{doublespace}
\noindent\(\pmb{\text{(*} \text{It is equal to}\ 2(\kappa _1-\kappa _4)(a_4\kappa _1-a_1\kappa _4)/(a_1-a_4){}^2+
2(\kappa _2-\kappa _3)(a_3\kappa _2-a_2\kappa _3)/(a_2-a_3){}^2\ \text{as in  Proposition 1}\text{  }\text{*)}}\)
\end{doublespace}
  % восстановить в окончательном варианте
}%\small
\section{Genus 3. Codes}\label{S:Genus3}
{\small
\begin{doublespace}
\noindent\(\pmb{ }\\
\pmb{ A_0\text{:=}\left(
\begin{array}{cc}
 l_{11} & l_{12} \\
 l_{21} & l_{22}
\end{array}
\right)\text{      }A_1\text{:=}\left(
\begin{array}{cc}
 l_{111} & l_{112} \\
 l_{121} & l_{122}
\end{array}
\right)\text{   }A_2\text{:=}\left(
\begin{array}{cc}
 l_{211} & l_{212} \\
 l_{221} & l_{222}
\end{array}
\right) }\\
\pmb{\alpha _1\text{:=}\left(
\begin{array}{c}
 \alpha _{11} \\
 \alpha _{21}
\end{array}
\right)\text{  }\alpha _2\text{:=}\left(
\begin{array}{c}
 \alpha _{12} \\
 \alpha _{22}
\end{array}
\right)\text{   }\alpha _3\text{:=}\left(
\begin{array}{c}
 \alpha _{13} \\
 \alpha _{23}
\end{array}
\right)\text{  }\alpha _4\text{:=}\left(
\begin{array}{c}
 \alpha _{14} \\
 \alpha _{24}
\end{array}
\right)\text{  }\alpha _5\text{:=}\left(
\begin{array}{c}
 1 \\
 0
\end{array}
\right)\text{    }\alpha _6\text{:=}\left(
\begin{array}{c}
 0 \\
 1
\end{array}
\right)}\\
\pmb{ \beta _1\text{:=}\left(
\begin{array}{c}
 \beta _{11} \\
 \beta _{21}
\end{array}
\right)\text{  }\beta _2\text{:=}\left(
\begin{array}{c}
 \beta _{12} \\
 \beta _{22}
\end{array}
\right)\text{  }\beta _3\text{:=}\left(
\begin{array}{c}
 \beta _{13} \\
 \beta _{23}
\end{array}
\right)\text{  }\beta _4\text{:=}\left(
\begin{array}{c}
 \beta _{14} \\
 \beta _{24}
\end{array}
\right)\text{  }\beta _5\text{:=}\left(
\begin{array}{c}
 0 \\
 \beta _{25}
\end{array}
\right)\text{  }\beta _6\text{:=}\left(
\begin{array}{c}
 \beta _{16} \\
 0
\end{array}
\right)}\\
\pmb{\text{(*}\ L\ \text{is the Lax operator:}
\text{  *)}}\\
\pmb{L\text{:=}A_0+A_1x+A_2x^2+\alpha _1.\text{Transpose}[\beta _1]*((y+b_1)/(x-a_1))+\alpha _2.\text{Transpose}[\beta
_2]*((y+b_2)/(x-a_2))+}
\pmb{\alpha _3.\text{Transpose}[\beta _3]*((y+b_3)/(x-a_3))+\alpha _4.\text{Transpose}[\beta _4]*((y+b_4)/(x-a_4))+\alpha
_5.\text{Transpose}[\beta _5]*((y+b_5)/(x-a_5))+}\\
\pmb{\alpha _6.\text{Transpose}[\beta _6]*((y+b_6)/(x-a_6))}\)
\end{doublespace}

\begin{doublespace}
\noindent\(\pmb{\text{(*} \text{Calculating the Hamiltonian}: \text{*)}}\)
\end{doublespace}

\begin{doublespace}
\noindent\(\pmb{x\text{:=}1/z{}^{\wedge}2}\)
\end{doublespace}

\begin{doublespace}
\noindent\(\pmb{y=(1/z{}^{\wedge}7)\text{Sqrt}[1+p_1z{}^{\wedge}2+p_2z{}^{\wedge}4+p_3z{}^{\wedge}6+p_4z{}^{\wedge}8+p_5z{}^{\wedge}10+p_6z{}^{\wedge}12+p_7z{}^{\wedge}14]}\)
\end{doublespace}

\begin{doublespace}
\noindent\(\pmb{\text{(* at the infinity}\ 1/y\sim z^7,\text{  }z^{-1}\text{dx}/y\sim z^3\text{dz},\ \text{hence}\ \text{Tr}[L.L]\
\text{should be calculated up to}\ z^{-4}\ \text{  *)}}\)
\end{doublespace}

\begin{doublespace}
\noindent\(\pmb{T=\text{Series}[\text{Tr}[L.L],\{z,0,-4\}]}\)
\end{doublespace}

\begin{doublespace}
\noindent\(\pmb{\text{(*} \text{Tr}[L.L]\sim z^{-10}, z^{-1}\text{Tr}[L.L]\sim z^{-11},\ \text{hence}\ \text{dx}/y\ \text{should be calculated up to}\ z^{10}\ \text{*)}}\)
\end{doublespace}

\begin{doublespace}
\noindent\(\pmb{(-1/z{}^{\wedge}3)\text{Series}[1/y,\{z,0,13\}]\text{   }\text{(*}\ \text{dx}/y\ \text{up to the factor}\ \text{dz}\text{ *)}}\)
\end{doublespace}

\begin{doublespace}
\noindent\(\pmb{\text{(*} \text{For}\ p_1=p_2=p_3=0\ \text{the Hamiltonian at  } z^{-5}\ \text{gives trivial equations. *)}}\)
\end{doublespace}

\begin{doublespace}
\noindent\(\pmb{\text{(*}\text{  }\text{The following Hamiltonian is the Hamiltonian at }\ z^{-4}\
(\text{for}\ p_1=p_2=p_3=0):\ \text{*)}}\)
\end{doublespace}

\begin{doublespace}
\noindent\(\pmb{\text{H2}=-\text{Residue}[(1/z)*T*(-z^4+\frac{p_1 z^6}{2}+(-\frac{3 p_1^2}{8}+\frac{p_2}{2}) z^8+\frac{1}{16}
(5 p_1^3-12 p_1 p_2+8 p_3) z^{10}),\{z,0\}]\text{    }}\\
\pmb{\text{(* with the differential}\ z^{-1}\text{dx}/y.\ \text{Below}\ p_1=p_2=p_3=0 \text{ *)}}\)
\end{doublespace}

\begin{doublespace}
\noindent\(\pmb{\text{(* First part of the equations out of eigenvalue conditions} }\\
\pmb{(\text{to generate the left hand sides of those equations one makes use of commands of the type MatrixForm}[\text{L$\alpha $}_i] }\\
\pmb{\text{like it has been done for genus}=2):\text{  }\text{*)}}\)
\end{doublespace}

\begin{doublespace}
\noindent\(\pmb{\text{sol}=}\\
\pmb{\text{Solve}[\{l_{11}+a_5l_{111}+a_5{}^2l_{211}+\frac{(b_5+b_1) \alpha _{11} \beta _{11}}{a_5-a_1}+\frac{(b_5+b_2)
\alpha _{12} \beta _{12}}{a_5-a_2}+\frac{(b_5+b_3) \alpha _{13} \beta _{13}}{a_5-a_3}+\frac{(b_5+b_4) \alpha _{14} \beta _{14}}{a_5-a_4}\text{==}\kappa
_5\&\&..}\\
\pmb{l_{12}+a_6 l_{112}+a_6{}^2l_{212}+\frac{(b_1+b_6) \alpha _{11} \beta _{21}}{-a_1+a_6}+\frac{(b_2+b_6) \alpha _{12} \beta
_{22}}{-a_2+a_6}+\frac{(b_3+b_6) \alpha _{13} \beta _{23}}{-a_3+a_6}+\frac{(b_4+b_6) \alpha _{14} \beta _{24}}{-a_4+a_6}+\frac{(b_5+b_6)
\beta _{25}}{-a_5+a_6}\text{==}0\&\&}\\
\pmb{\alpha _{11} (l_{11}+a_1 l_{111}+a_1{}^2l_{211}+\frac{(b_1+b_2) \alpha _{12} \beta _{12}}{a_1-a_2}+\frac{(b_1+b_3)
\alpha _{13} \beta _{13}}{a_1-a_3}+\frac{(b_1+b_4) \alpha _{14} \beta _{14}}{a_1-a_4})+}\\
\pmb{\alpha _{21} (l_{12}+a_1 l_{112}+a_1{}^2l_{212}+\frac{(b_1+b_2) \alpha _{12} \beta _{22}}{a_1-a_2}+\frac{(b_1+b_3)
\alpha _{13} \beta _{23}}{a_1-a_3}+\frac{(b_1+b_4) \alpha _{14} \beta _{24}}{a_1-a_4}+\frac{(b_1+b_5) \beta _{25}}{a_1-a_5})\text{==}\kappa
_1\alpha _{11}\&\&}\\
\pmb{\alpha _{12} (l_{11}+a_2 l_{111}+a_2{}^2l_{211}+\frac{(b_1+b_2) \alpha _{11} \beta _{11}}{-a_1+a_2}+\frac{(b_3+b_2)
\alpha _{13} \beta _{13}}{-a_3+a_2}+\frac{(b_4+b_2) \alpha _{14} \beta _{14}}{-a_4+a_2})+}\\
\pmb{\alpha _{22} (l_{12}+a_2 l_{112}+a_2{}^2l_{212}+\frac{(b_1+b_2) \alpha _{11} \beta _{21}}{-a_1+a_2}+\frac{(b_3+b_2)
\alpha _{13} \beta _{23}}{-a_3+a_2}+\frac{(b_4+b_2) \alpha _{14} \beta _{24}}{-a_4+a_2}+\frac{(b_2+b_5) \beta _{25}}{a_2-a_5})\text{==}\kappa
_2\alpha _{12}\&\&}\\
\pmb{\alpha _{13} (l_{11}+a_3 l_{111}+a_3{}^2l_{211}+\frac{(b_1+b_3) \alpha _{11} \beta _{11}}{-a_1+a_3}+\frac{(b_3+b_2)
\alpha _{12} \beta _{12}}{-a_2+a_3}+\frac{(b_4+b_3) \alpha _{14} \beta _{14}}{-a_4+a_3})+}\\
\pmb{\alpha _{23} (l_{12}+a_3 l_{112}+a_3{}^2l_{212}+\frac{(b_1+b_3) \alpha _{11} \beta _{21}}{-a_1+a_3}+\frac{(b_3+b_2)
\alpha _{12} \beta _{22}}{-a_2+a_3}+\frac{(b_4+b_3) \alpha _{14} \beta _{24}}{-a_4+a_3}+\frac{(b_3+b_5) \beta _{25}}{a_3-a_5})\text{==}\kappa
_3\alpha _{13}\&\&}\\
\pmb{\alpha _{14} (l_{11}+a_4 l_{111}+a_4{}^2l_{211}+\frac{(b_1+b_4) \alpha _{11} \beta _{11}}{-a_1+a_4}+\frac{(b_4+b_2)
\alpha _{12} \beta _{12}}{-a_2+a_4}+\frac{(b_4+b_3) \alpha _{13} \beta _{13}}{-a_3+a_4})+}\\
\pmb{.\alpha _{24} (l_{12}+a_4 l_{112}+a_4{}^2l_{212}+\frac{(b_1+b_4) \alpha _{11} \beta _{21}}{-a_1+a_4}+\frac{(b_4+b_2)
\alpha _{12} \beta _{22}}{-a_2+a_4}+\frac{(b_4+b_3) \alpha _{13} \beta _{23}}{-a_3+a_4}+\frac{(b_4+b_5) \beta _{25}}{a_4-a_5})\text{==}\kappa
_4\alpha _{14}\},}\\
\pmb{.\{l_{11},l_{12},l_{111},l_{112},l_{211},l_{212}\}]}\)
\end{doublespace}

\begin{doublespace}
\noindent\(\pmb{\{\{l_{11},l_{12},l_{111},l_{112},l_{211},l_{212}\}\}=\{l_{11},l_{12},l_{111},l_{112},l_{211},l_{212}\}\text{/.}\text{sol}}\)
\end{doublespace}

\begin{doublespace}
\noindent\(\pmb{\text{(* Second part of equations out of eigenvalue conditions:    *)}}\)
\end{doublespace}

\begin{doublespace}
\noindent\(\pmb{\text{sol}=}\\
\pmb{\text{Solve}[\{l_{21}+a_5l_{121}+a_5{}^2l_{221}+\frac{(b_5+b_1) \alpha _{21} \beta _{11}}{a_5-a_1}+\frac{(b_5+b_2)
\alpha _{22} \beta _{12}}{a_5-a_2}+\frac{(b_5+b_3) \alpha _{23} \beta _{13}}{a_5-a_3}+\frac{(b_5+b_4) \alpha _{24} \beta _{14}}{a_5-a_4}+\frac{(b_5+b_6)
\beta _{16}}{a_5-a_6}\text{==}0\&\&..}\\
\pmb{l_{22}+a_6 l_{122}+a_6{}^2l_{222}+\frac{(b_1+b_6) \alpha _{21} \beta _{21}}{-a_1+a_6}+\frac{(b_2+b_6) \alpha _{22} \beta
_{22}}{-a_2+a_6}+\frac{(b_3+b_6) \alpha _{23} \beta _{23}}{-a_3+a_6}+\frac{(b_4+b_6) \alpha _{24} \beta _{24}}{-a_4+a_6}\text{==}\kappa
_6\&\&}\\
\pmb{\alpha _{11} (l_{21}+a_1 l_{121}+a_1{}^2l_{221}+\frac{(b_1+b_2) \alpha _{22} \beta _{12}}{a_1-a_2}+\frac{(b_1+b_3)
\alpha _{23} \beta _{13}}{a_1-a_3}+\frac{(b_1+b_4) \alpha _{24} \beta _{14}}{a_1-a_4}+\frac{(b_1+b_6) \beta _{16}}{a_1-a_6})+}\\
\pmb{\alpha _{21} (l_{22}+a_1 l_{122}+a_1{}^2l_{222}+\frac{(b_1+b_2) \alpha _{22} \beta _{22}}{a_1-a_2}+\frac{(b_1+b_3)
\alpha _{23} \beta _{23}}{a_1-a_3}+\frac{(b_1+b_4) \alpha _{24} \beta _{24}}{a_1-a_4})\text{==}\kappa _1\alpha _{21}\&\&}\\
\pmb{\alpha _{12} (l_{21}+a_2 l_{121}+a_2{}^2l_{221}+\frac{(b_1+b_2) \alpha _{21} \beta _{11}}{-a_1+a_2}+\frac{(b_3+b_2)
\alpha _{23} \beta _{13}}{-a_3+a_2}+\frac{(b_4+b_2) \alpha _{24} \beta _{14}}{-a_4+a_2}+\frac{(b_2+b_6) \beta _{16}}{a_2-a_6})+}\\
\pmb{\alpha _{22} (l_{22}+a_2 l_{122}+a_2{}^2l_{222}+\frac{(b_1+b_2) \alpha _{21} \beta _{21}}{-a_1+a_2}+\frac{(b_3+b_2)
\alpha _{23} \beta _{23}}{-a_3+a_2}+\frac{(b_4+b_2) \alpha _{24} \beta _{24}}{-a_4+a_2})\text{==}\kappa _2\alpha _{22}\&\&}\\
\pmb{\alpha _{13} (l_{21}+a_3 l_{121}+a_3{}^2l_{221}+\frac{(b_1+b_3) \alpha _{21} \beta _{11}}{-a_1+a_3}+\frac{(b_3+b_2)
\alpha _{22} \beta _{12}}{-a_2+a_3}+\frac{(b_4+b_3) \alpha _{24} \beta _{14}}{-a_4+a_3}+\frac{(b_3+b_6) \beta _{16}}{a_3-a_6})+}\\
\pmb{\alpha _{23} (l_{22}+a_3 l_{122}+a_3{}^2l_{222}+\frac{(b_1+b_3) \alpha _{21} \beta _{21}}{-a_1+a_3}+\frac{(b_3+b_2)
\alpha _{22} \beta _{22}}{-a_2+a_3}+\frac{(b_4+b_3) \alpha _{24} \beta _{24}}{-a_4+a_3})\text{==}\kappa _3\alpha _{23}\&\&}\\
\pmb{\alpha _{14} (l_{21}+a_4 l_{121}+a_4{}^2l_{221}+\frac{(b_1+b_4) \alpha _{21} \beta _{11}}{-a_1+a_4}+\frac{(b_4+b_2)
\alpha _{22} \beta _{12}}{-a_2+a_4}+\frac{(b_4+b_3) \alpha _{23} \beta _{13}}{-a_3+a_4}+\frac{(b_4+b_6) \beta _{16}}{a_4-a_6})+}\\
\pmb{..\alpha _{24} (l_{22}+a_4 l_{122}+a_4{}^2l_{222}+\frac{(b_1+b_4) \alpha _{21} \beta _{21}}{-a_1+a_4}+\frac{(b_4+b_2)
\alpha _{22} \beta _{22}}{-a_2+a_4}+\frac{(b_4+b_3) \alpha _{23} \beta _{23}}{-a_3+a_4})\text{==}\kappa _4\alpha _{24}\},\{l_{21},l_{22},l_{121},l_{122},l_{221},l_{222}\}]}\)
\end{doublespace}

\begin{doublespace}
\noindent\(\pmb{\{\{l_{21},l_{22},l_{121},l_{122},l_{221},l_{222}\}\}=\{l_{21},l_{22},l_{121},l_{122},l_{221},l_{222}\}\text{/.}\text{sol}}\)
\end{doublespace}

\begin{doublespace}
\noindent\(\pmb{\text{(*  It is recommended to collect and keep the results of the foregoing replacements for}\ l_{\text{ij}},}\\
\pmb{l_{\text{ijk}}\ \text{in a separate file because solution of the system takes quite
a lot time *)}}\)
\end{doublespace}

\begin{doublespace}
\noindent\(\pmb{\text{H2}\text{  }\text{(*} \text{calculating the Hamiltonian taking account of the replacements for}\ l_{\text{ij}},l_{\text{ijk}}\text{  *)}}\)
\end{doublespace}
%%%%%%%%%%%%%%%%%%%%%%%%%%%%%%%%%%%%%%%%%%%%%%%
  % восстановить в окончательном варианте
(* Next we check admissibility of the reduction:  *)

\begin{doublespace}
\noindent\(\pmb{\text{DA11}=}
\pmb{\text{Simplify}[}
\pmb{\text{Series}[D[\text{H2},\beta _{11}]\text{/.}\,\{\alpha _{11}\to 0,\alpha _{13}\to 0,\alpha _{24}\to 0,\beta _{11}\to 0,\beta
_{21}\to 0,\beta _{12}\to 0,\beta _{22}\to 0,\beta _{13}\to 0,}\\
\pmb{\beta _{23}\to 0,\beta _{14}\to 0,\beta _{24}\to 0,\beta _{25}\to 0,\beta _{16}\to
0\},\{\alpha _{22},0,0\}]\text{/.}\,\{\alpha _{22}\to 0\}]}\)
\end{doublespace}

\begin{doublespace}
\noindent\(0\)
\end{doublespace}

\begin{doublespace}
\noindent\(\pmb{\text{DA22}=}
\pmb{\text{Simplify}[}
\pmb{\text{Series}[D[\text{H2},\beta _{22}]\text{/.}\,\{\alpha _{11}\to 0,\alpha _{13}\to 0,\alpha _{24}\to 0,\beta _{11}\to 0,\beta
_{21}\to 0,\beta _{12}\to 0,\beta _{22}\to 0,\beta _{13}\to 0,}\\
\pmb{\beta _{23}\to 0,\beta _{14}\to 0,\beta _{24}\to 0,\beta _{25}\to 0,\beta _{16}\to
0\},\{\alpha _{22},0,0\}]\text{/.}\,\{\alpha _{22}\to 0\}]}\)
\end{doublespace}

\begin{doublespace}
\noindent\(0\)
\end{doublespace}

\begin{doublespace}
\noindent\(\pmb{\text{DA13}=}
\pmb{\text{Simplify}[}
\pmb{\text{Series}[D[\text{H2},\beta _{13}]\text{/.}\,\{\alpha _{11}\to 0,\alpha _{13}\to 0,\alpha _{24}\to 0,\beta _{11}\to 0,\beta
_{21}\to 0,\beta _{12}\to 0,\beta _{22}\to 0,\beta _{13}\to 0,}\\
\pmb{\beta _{23}\to 0,\beta _{14}\to 0,\beta _{24}\to 0,\beta _{25}\to 0,\beta _{16}\to
0\},\{\alpha _{22},0,0\}]\text{/.}\,\{\alpha _{22}\to 0\}]}\)
\end{doublespace}

\begin{doublespace}
\noindent\(0\)
\end{doublespace}

\begin{doublespace}
\noindent\(\pmb{\text{DA24}=}
\pmb{\text{Simplify}[}
\pmb{\text{Series}[D[\text{H2},\beta _{24}]\text{/.}\,\{\alpha _{11}\to 0,\alpha _{13}\to 0,\alpha _{24}\to 0,\beta _{11}\to 0,\beta
_{21}\to 0,\beta _{12}\to 0,\beta _{22}\to 0,\beta _{13}\to 0,}\\
\pmb{\beta _{23}\to 0,\beta _{14}\to 0,\beta _{24}\to 0,\beta _{25}\to 0,\beta _{16}\to
0\},\{\alpha _{22},0,0\}]\text{/.}\,\{\alpha _{22}\to 0\}] }\)
\end{doublespace}

\begin{doublespace}
\noindent\(0\)
\end{doublespace}

\begin{doublespace}
\noindent\(\pmb{\text{DB11}=}
\pmb{\text{Simplify}[}
\pmb{\text{Series}[D[\text{H2},\alpha _{11}]\text{/.}\,\{\alpha _{11}\to 0,\alpha _{13}\to 0,\alpha _{24}\to 0,\beta _{11}\to
0,\beta _{21}\to 0,\beta _{12}\to 0,\beta _{22}\to 0,\beta _{13}\to 0,}\\
\pmb{\beta _{23}\to 0,\beta _{14}\to 0,\beta _{24}\to 0,\beta _{25}\to 0,\beta _{16}\to
0\},\{\alpha _{22},0,0\}]\text{/.}\,\{\alpha _{22}\to 0\}] }\)
\end{doublespace}

\begin{doublespace}
\noindent\(0\)
\end{doublespace}

\begin{doublespace}
\noindent\(\pmb{\text{DB21}=}
\pmb{\text{Simplify}[}
\pmb{\text{Series}[D[\text{H2},\alpha _{21}]\text{/.}\,\{\alpha _{11}\to 0,\alpha _{13}\to 0,\alpha _{24}\to 0,\beta _{11}\to
0,\beta _{21}\to 0,\beta _{12}\to 0,\beta _{22}\to 0,\beta _{13}\to 0,}\\
\pmb{\beta _{23}\to 0,\beta _{14}\to 0,\beta _{24}\to 0,\beta _{25}\to 0,\beta _{16}\to
0\},\{\alpha _{22},0,0\}]\text{/.}\,\{\alpha _{22}\to 0\}] }\)
\end{doublespace}

\begin{doublespace}
\noindent\(0\)
\end{doublespace}

\begin{doublespace}
\noindent\(\pmb{\text{DB12}=}
\pmb{\text{Simplify}[}
\pmb{\text{Series}[D[\text{H2},\alpha _{12}]\text{/.}\,\{\alpha _{11}\to 0,\alpha _{13}\to 0,\alpha _{24}\to 0,\beta _{11}\to
0,\beta _{21}\to 0,\beta _{12}\to 0,\beta _{22}\to 0,\beta _{13}\to 0,}\\
\pmb{\beta _{23}\to 0,\beta _{14}\to 0,\beta _{24}\to 0,\beta _{25}\to 0,\beta _{16}\to
0\},\{\alpha _{22},0,0\}]\text{/.}\,\{\alpha _{22}\to 0\}] }\)
\end{doublespace}

\begin{doublespace}
\noindent\(0\)
\end{doublespace}

\begin{doublespace}
\noindent\(\pmb{\text{DB22}=}
\pmb{\text{Simplify}[}
\pmb{\text{Series}[D[\text{H2},\alpha _{22}]\text{/.}\,\{\alpha _{11}\to 0,\alpha _{13}\to 0,\alpha _{24}\to 0,\beta _{11}\to
0,\beta _{21}\to 0,\beta _{12}\to 0,\beta _{22}\to 0,\beta _{13}\to 0,}\\
\pmb{\beta _{23}\to 0,\beta _{14}\to 0,\beta _{24}\to 0,\beta _{25}\to 0,\beta _{16}\to
0\},\{\alpha _{22},0,0\}]\text{/.}\,\{\alpha _{22}\to 0\}] }\)
\end{doublespace}

\begin{doublespace}
\noindent\(0\)
\end{doublespace}

\begin{doublespace}
\noindent\(\pmb{\text{DB13}=}
\pmb{\text{Simplify}[}
\pmb{\text{Series}[D[\text{H2},\alpha _{13}]\text{/.}\,\{\alpha _{11}\to 0,\alpha _{13}\to 0,\alpha _{24}\to 0,\beta _{11}\to
0,\beta _{21}\to 0,\beta _{12}\to 0,\beta _{22}\to 0,\beta _{13}\to 0,}\\
\pmb{\beta _{23}\to 0,\beta _{14}\to 0,\beta _{24}\to 0,\beta _{25}\to 0,\beta _{16}\to
0\},\{\alpha _{22},0,0\}]\text{/.}\,\{\alpha _{22}\to 0\}] }\)
\end{doublespace}

\begin{doublespace}
\noindent\(0\)
\end{doublespace}

\begin{doublespace}
\noindent\(\pmb{\text{DB23}=}
\pmb{\text{Simplify}[}
\pmb{\text{Series}[D[\text{H2},\alpha _{23}]\text{/.}\,\{\alpha _{11}\to 0,\alpha _{13}\to 0,\alpha _{24}\to 0,\beta _{11}\to
0,\beta _{21}\to 0,\beta _{12}\to 0,\beta _{22}\to 0,\beta _{13}\to 0,}\\
\pmb{\beta _{23}\to 0,\beta _{14}\to 0,\beta _{24}\to 0,\beta _{25}\to 0,\beta _{16}\to
0\},\{\alpha _{22},0,0\}]\text{/.}\,\{\alpha _{22}\to 0\}] }\)
\end{doublespace}

\begin{doublespace}
\noindent\(0\)
\end{doublespace}

\begin{doublespace}
\noindent\(\pmb{\text{DB14}=}
\pmb{\text{Simplify}[}
\pmb{\text{Series}[D[\text{H2},\alpha _{14}]\text{/.}\,\{\alpha _{11}\to 0,\alpha _{13}\to 0,\alpha _{24}\to 0,\beta _{11}\to
0,\beta _{21}\to 0,\beta _{12}\to 0,\beta _{22}\to 0,\beta _{13}\to 0,}\\
\pmb{\beta _{23}\to 0,\beta _{14}\to 0,\beta _{24}\to 0,\beta _{25}\to 0,\beta _{16}\to
0\},\{\alpha _{22},0,0\}]\text{/.}\,\{\alpha _{22}\to 0\}] }\)
\end{doublespace}

\begin{doublespace}
\noindent\(0\)
\end{doublespace}

\begin{doublespace}
\noindent\(\pmb{\text{DB24}=}
\pmb{\text{Simplify}[}
\pmb{\text{Series}[D[\text{H2},\alpha _{24}]\text{/.}\,\{\alpha _{11}\to 0,\alpha _{13}\to 0,\alpha _{24}\to 0,\beta _{11}\to
0,\beta _{21}\to 0,\beta _{12}\to 0,\beta _{22}\to 0,\beta _{13}\to 0,}\\
\pmb{\beta _{23}\to 0,\beta _{14}\to 0,\beta _{24}\to 0,\beta _{25}\to 0,\beta _{16}\to
0\},\{\alpha _{22},0,0\}]\text{/.}\,\{\alpha _{22}\to 0\}] }\)
\end{doublespace}

\begin{doublespace}
\noindent\(0\)
\end{doublespace}

\begin{doublespace}
\noindent\(\pmb{\text{DB25}=}
\pmb{\text{Simplify}[}
\pmb{\text{Series}[D[\text{H2},\alpha _{25}]\text{/.}\,\{\alpha _{11}\to 0,\alpha _{13}\to 0,\alpha _{24}\to 0,\beta _{11}\to
0,\beta _{21}\to 0,\beta _{12}\to 0,\beta _{22}\to 0,\beta _{13}\to 0,}\\
\pmb{\beta _{23}\to 0,\beta _{14}\to 0,\beta _{24}\to 0,\beta _{25}\to 0,\beta _{16}\to
0\},\{\alpha _{22},0,0\}]\text{/.}\,\{\alpha _{22}\to 0\}] }\)
\end{doublespace}

\begin{doublespace}
\noindent\(0\)
\end{doublespace}

\begin{doublespace}
\noindent\(\pmb{\text{DB16}=}
\pmb{\text{Simplify}[}
\pmb{\text{Series}[D[\text{H2},\alpha _{16}]\text{/.}\,\{\alpha _{11}\to 0,\alpha _{13}\to 0,\alpha _{24}\to 0,\beta _{11}\to
0,\beta _{21}\to 0,\beta _{12}\to 0,\beta _{22}\to 0,\beta _{13}\to 0,}\\
\pmb{\beta _{23}\to 0,\beta _{14}\to 0,\beta _{24}\to 0,\beta _{25}\to 0,\beta _{16}\to
0\},\{\alpha _{22},0,0\}]\text{/.}\,\{\alpha _{22}\to 0\}] }\)
\end{doublespace}

\begin{doublespace}
\noindent\(0\)
\end{doublespace}

(* Next we find the (right parts of the) reduced equations (from the full Hamiltonian)  for the remainder of the variables, calculate the reduced Hamiltonian and check that the equations of the reduced system are the same as the equations from the reduced Hamiltonian   *)

\begin{doublespace}
\noindent\(\pmb{\text{DA12}=}
\pmb{\text{Simplify}[}
\pmb{\text{Series}[D[\text{H2},\beta _{12}]\text{/.}\,\{\alpha _{11}\to 0,\alpha _{13}\to 0,\alpha _{24}\to 0,\beta _{11}\to 0,\beta
_{21}\to 0,\beta _{12}\to 0,\beta _{22}\to 0,\beta _{13}\to 0,}\\
\pmb{\beta _{23}\to 0,\beta _{14}\to 0,\beta _{24}\to 0,\beta _{25}\to 0,\beta _{16}\to
0\},\{\alpha _{22},0,0\}]\text{/.}\,\{\alpha _{22}\to 0\}] }\)
\end{doublespace}

\begin{doublespace}
\noindent\(\frac{1}{(a_2-a_4){}^3 (a_2-a_5){}^3 (a_4-a_5){}^2}2 \alpha _{12} (-a_2 (-a_5^3 (b_2+b_4)+a_4^3
(b_2+b_5)+a_2^2 (-a_5 (b_2+b_4)+a_4 (b_2+b_5))-2 a_2 (-a_5^2 (b_2+b_4)+a_4^2 (b_2+b_5)))
(a_5 (\kappa _2-\kappa _4)+a_2 (\kappa _4-\kappa _5)+a_4 (-\kappa _2+\kappa _5))+(a_5^3 (b_2+b_4)+a_2^3
(b_4-b_5)-a_4^3 (b_2+b_5)+a_2^2 (-a_5 (b_2+b_4)+a_4 (b_2+b_5))+a_2 (-a_5^2 (b_2+b_4)+a_4^2
(b_2+b_5))) (a_5^2 (\kappa _2-\kappa _4)+a_2^2 (\kappa _4-\kappa _5)+a_4^2 (-\kappa _2+\kappa
_5))-(a_5^2 (b_2+b_4)+a_2^2 (b_4-b_5)-a_4^2 (b_2+b_5)+2 a_2 (-a_5 (b_2+b_4)+a_4
(b_2+b_5))) (a_2 a_5 (-a_2+a_5) \kappa _4+a_4^2 (a_5 \kappa _2-a_2 \kappa _5)+a_4 (-a_5^2
\kappa _2+a_2^2 \kappa _5)))\)
\end{doublespace}

\begin{doublespace}
\noindent\(\pmb{\text{DA14}=}
\pmb{\text{Simplify}[}
\pmb{\text{Series}[D[\text{H2},\beta _{14}]\text{/.}\,\{\alpha _{11}\to 0,\alpha _{13}\to 0,\alpha _{24}\to 0,\beta _{11}\to 0,\beta
_{21}\to 0,\beta _{12}\to 0,\beta _{22}\to 0,\beta _{13}\to 0,}\\
\pmb{\beta _{23}\to 0,\beta _{14}\to 0,\beta _{24}\to 0,\beta _{25}\to 0,\beta _{16}\to
0\},\{\alpha _{22},0,0\}]\text{/.}\,\{\alpha _{22}\to 0\}]}\)
\end{doublespace}

\begin{doublespace}
\noindent\(\frac{1}{(a_2-a_4){}^3 (a_2-a_5){}^2 (a_4-a_5){}^3}2 \alpha _{14} (-a_4 (a_5^3 (b_2+b_4)-a_2^3
(b_4+b_5)+a_4^2 (a_5 (b_2+b_4)-a_2 (b_4+b_5))-2 a_4 (a_5^2 (b_2+b_4)-a_2^2 (b_4+b_5)))
(a_5 (-\kappa _2+\kappa _4)+a_4 (\kappa _2-\kappa _5)+a_2 (-\kappa _4+\kappa _5))-(a_5^3 (b_2+b_4)+a_4^3
(b_2-b_5)-a_2^3 (b_4+b_5)+a_4^2 (-a_5 (b_2+b_4)+a_2 (b_4+b_5))+a_4 (-a_5^2 (b_2+b_4)+a_2^2
(b_4+b_5))) (a_5^2 (-\kappa _2+\kappa _4)+a_4^2 (\kappa _2-\kappa _5)+a_2^2 (-\kappa _4+\kappa
_5))+(-a_5^2 (b_2+b_4)+a_4^2 (-b_2+b_5)+a_2^2 (b_4+b_5)+2 a_4 (a_5 (b_2+b_4)-a_2
(b_4+b_5))) (a_2 a_5 (-a_2+a_5) \kappa _4+a_4^2 (a_5 \kappa _2-a_2 \kappa _5)+a_4 (-a_5^2
\kappa _2+a_2^2 \kappa _5)))\)
\end{doublespace}
  % восстановить в окончательном варианте
\begin{doublespace}
\noindent\(\pmb{\text{Da1}=\text{Simplify}[\text{Series}[D[\text{H2},\kappa _1]\text{/.}\{\alpha _{11}\to 0,\alpha _{13}\to
0,\alpha _{24}\to 0,\beta _{21}\to 0,\beta _{12}\to 0,\beta _{23}\to 0,\beta _{14}\to 0,}\\
\pmb{\beta _{25}\to 0,\beta _{16}\to 0\},\{\alpha _{22},0,0\}]\text{/.}\,\{\alpha _{22}\to 0\}]\text{   }}\)
\end{doublespace}

\begin{doublespace}
\noindent\(\frac{1}{(a_1-a_3){}^2 (a_1-a_6){}^2 (a_3-a_6)}2 (a_3^3 (\kappa _1-\kappa _6)+a_3^2 (a_6
(3 \kappa _1-2 \kappa _6)-a_1 \kappa _6)+a_6 (a_1 a_6 \kappa _3+a_6^2 (-\kappa _1+\kappa _3)+a_1^2 (-2 \kappa
_3+\kappa _6))+a_3 (a_6^2 (-3 \kappa _1+2 \kappa _3)-a_1^2 (\kappa _3-2 \kappa _6)+a_1 a_6 (-\kappa _3+\kappa
_6)))\)
\end{doublespace}

\begin{doublespace}
\noindent\(\pmb{\text{Da2}=}
\pmb{\text{Simplify}[}
\pmb{\text{Series}[D[\text{H2},\kappa _2]\text{/.}\,\{\alpha _{11}\to 0,\alpha _{13}\to 0,\alpha _{24}\to 0,\beta _{11}\to 0,\beta
_{21}\to 0,\beta _{12}\to 0,\beta _{22}\to 0,}\\
\pmb{\beta _{13}\to 0,\beta _{23}\to 0,\beta _{14}\to 0,\beta _{24}\to 0,\beta _{25}\to 0,\beta _{16}\to
0\},\{\alpha _{22},0,0\}]\text{/.}\,\{\alpha _{22}\to 0\}]\text{   }}\)
\end{doublespace}

\begin{doublespace}
\noindent\(\frac{1}{(a_2-a_4){}^2 (a_2-a_5){}^2 (a_4-a_5)}2 (a_4^3 (\kappa _2-\kappa _5)+a_4^2 (a_5
(3 \kappa _2-2 \kappa _5)-a_2 \kappa _5)+a_5 (a_2 a_5 \kappa _4+a_5^2 (-\kappa _2+\kappa _4)+a_2^2 (-2 \kappa
_4+\kappa _5))+a_4 (a_5^2 (-3 \kappa _2+2 \kappa _4)-a_2^2 (\kappa _4-2 \kappa _5)+a_2 a_5 (-\kappa _4+\kappa
_5)))\)
\end{doublespace}

\begin{doublespace}
\noindent\(\pmb{\text{Da3}=}
\pmb{\text{Simplify}[}
\pmb{\text{Series}[D[\text{H2},\kappa _3]\text{/.}\,\{\alpha _{11}\to 0,\alpha _{13}\to 0,\alpha _{24}\to 0,\beta _{11}\to 0,\beta
_{21}\to 0,\beta _{12}\to 0,\beta _{22}\to 0,}\\
\pmb{\beta _{13}\to 0,\beta _{23}\to 0,\beta _{14}\to 0,\beta _{24}\to 0,\beta _{25}\to 0,\beta _{16}\to
0\},\{\alpha _{22},0,0\}]\text{/.}\,\{\alpha _{22}\to 0\}]\text{  }\text{(*}\text{*)}}\)
\end{doublespace}

\begin{doublespace}
\noindent\(\frac{1}{(a_1-a_3){}^2 (a_1-a_6) (a_3-a_6){}^2}2 (a_1^3 (\kappa _3-\kappa _6)+a_1^2 (a_6
(3 \kappa _3-2 \kappa _6)-a_3 \kappa _6)+a_6 (a_3 a_6 \kappa _1+a_6^2 (\kappa _1-\kappa _3)+a_3^2 (-2 \kappa
_1+\kappa _6))+a_1 (a_6^2 (2 \kappa _1-3 \kappa _3)-a_3^2 (\kappa _1-2 \kappa _6)+a_3 a_6 (-\kappa _1+\kappa
_6)))\)
\end{doublespace}

\begin{doublespace}
\noindent\(\pmb{\text{Da4}=}
\pmb{\text{Simplify}[}
\pmb{\text{Series}[D[\text{H2},\kappa _4]\text{/.}\,\{\alpha _{11}\to 0,\alpha _{13}\to 0,\alpha _{24}\to 0,\beta _{11}\to 0,\beta
_{21}\to 0,\beta _{12}\to 0,\beta _{22}\to 0,}\\
\pmb{\beta _{13}\to 0,\beta _{23}\to 0,\beta _{14}\to 0,\beta _{24}\to 0,\beta _{25}\to 0,\beta _{16}\to
0\},}
\pmb{\{\alpha _{22},0,0\}]\text{/.}\,\{\alpha _{22}\to 0\}]}\)
\end{doublespace}

\begin{doublespace}
\noindent\(\frac{1}{(a_2-a_4){}^2 (a_2-a_5) (a_4-a_5){}^2}2 (a_2^3 (\kappa _4-\kappa _5)+a_2^2 (a_5
(3 \kappa _4-2 \kappa _5)-a_4 \kappa _5)+a_5 (a_4 a_5 \kappa _2+a_5^2 (\kappa _2-\kappa _4)+a_4^2 (-2 \kappa
_2+\kappa _5))+a_2 (a_5^2 (2 \kappa _2-3 \kappa _4)-a_4^2 (\kappa _2-2 \kappa _5)+a_4 a_5 (-\kappa _2+\kappa
_5)))\)
\end{doublespace}

\begin{doublespace}
\noindent\(\pmb{\text{Da5}=}
\pmb{\text{Simplify}[}
\pmb{\text{Series}[D[\text{H2},\kappa _5]\text{/.}\,\{\alpha _{11}\to 0,\alpha _{13}\to 0,\alpha _{24}\to 0,\beta _{11}\to 0,\beta
_{21}\to 0,\beta _{12}\to 0,\beta _{22}\to 0,}\\
\pmb{\beta _{13}\to 0,\beta _{23}\to 0,\beta _{14}\to 0,\beta _{24}\to 0,\beta _{25}\to 0,\beta _{16}\to
0\},\{\alpha _{22},0,0\}]\text{/.}\,\{\alpha _{22}\to 0\}]}\)
\end{doublespace}

\begin{doublespace}
\noindent\(-\frac{1}{(a_2-a_4) (a_2-a_5){}^2 (a_4-a_5){}^2}2 (a_2^2 (a_5 \kappa _4+a_4 (2 \kappa
_4-3 \kappa _5))-a_4 (a_4 a_5 \kappa _2+a_5^2 (-2 \kappa _2+\kappa _4)+a_4^2 (\kappa _2-\kappa _5))+a_2^3
(\kappa _4-\kappa _5)+a_2 (a_5^2 (\kappa _2-2 \kappa _4)+a_4 a_5 (\kappa _2-\kappa _4)+a_4^2 (-2 \kappa
_2+3 \kappa _5))) \)
\end{doublespace}

\begin{doublespace}
\noindent\(\pmb{\text{Da6}=}
\pmb{\text{Simplify}[}
\pmb{\text{Series}[D[\text{H2},\kappa _6]\text{/.}\,\{\alpha _{11}\to 0,\alpha _{13}\to 0,\alpha _{24}\to 0,\beta _{11}\to 0,\beta
_{21}\to 0,\beta _{12}\to 0,\beta _{22}\to 0,}\\
\pmb{\beta _{13}\to 0,\beta _{23}\to 0,\beta _{14}\to 0,\beta _{24}\to 0,\beta _{25}\to 0,\beta _{16}\to
0\},\{\alpha _{22},0,0\}]\text{/.}\,\{\alpha _{22}\to 0\}]}\)
\end{doublespace}

\begin{doublespace}
\noindent\(-\frac{1}{(a_1-a_3) (a_1-a_6){}^2 (a_3-a_6){}^2}2 (a_1^2 (a_6 \kappa _3+a_3 (2 \kappa
_3-3 \kappa _6))-a_3 (a_3 a_6 \kappa _1+a_6^2 (-2 \kappa _1+\kappa _3)+a_3^2 (\kappa _1-\kappa _6))+a_1^3
(\kappa _3-\kappa _6)+a_1 (a_6^2 (\kappa _1-2 \kappa _3)+a_3 a_6 (\kappa _1-\kappa _3)+a_3^2 (-2 \kappa
_1+3 \kappa _6))) \)
\end{doublespace}

\begin{doublespace}
\noindent\(\pmb{\text{Simplify}[\text{Da1}+\text{Da3}+\text{Da6}]}\)
\end{doublespace}

\begin{doublespace}
\noindent\(\frac{2 (a_6 (-\kappa _1+\kappa _3)+a_3 (\kappa _1-\kappa _6)+a_1 (-\kappa _3+\kappa _6))}{(a_1-a_3)
(a_1-a_6) (a_3-a_6)}\)
\end{doublespace}

\begin{doublespace}
\noindent\(\pmb{\text{(*} \text{The dependence of}\ a_s\ \text{on}\ b_s\ \text{has been switched on at calculating the}\ \text{D$\kappa $}_s.\ \text{It does not affect the} }\\
\pmb{\text{result}:\text{  }\text{*)}}\)
\end{doublespace}

\begin{doublespace}
\noindent\(\pmb{\text{D$\kappa $1}=}
\pmb{\text{Simplify}[}
\pmb{\text{Series}[-D[\text{H2}\text{/.}\,\{b_1\to f[a_1]\},a_1]\text{/.}\,\{\alpha _{11}\to 0,\alpha _{13}\to
0,\alpha _{24}\to 0,\beta _{11}\to 0,\beta _{21}\to 0,}\\
\pmb{\beta _{12}\to 0,\beta _{22}\to 0,\beta _{13}\to 0,\beta _{23}\to 0,\beta _{14}\to 0,\beta _{24}\to 0,\beta _{25}\to 0,\beta _{16}\to 0\},\{\alpha _{22},0,0\}]\text{/.}\,\{\alpha
_{22}\to 0\}]}\)
\end{doublespace}

\begin{doublespace}
\noindent\(\frac{1}{(a_1-a_3){}^3 (a_1-a_6){}^3 (a_3-a_6){}^2}2 (-a_6^5 (\kappa _1-\kappa _3){}^2+a_3
a_6^4 (-3 \kappa _1^2+5 \kappa _1 \kappa _3-2 \kappa _3^2)+a_3^2 a_6^3 (\kappa _1-\kappa _3) (4 \kappa _1-3 \kappa _6)+a_3^3
a_6^2 (4 \kappa _1-3 \kappa _3) (\kappa _1-\kappa _6)-a_3^5 (\kappa _1-\kappa _6){}^2+a_1^4 (a_3 (\kappa
_3-2 \kappa _6)+a_6 (2 \kappa _3-\kappa _6)) (\kappa _3-\kappa _6)-3 a_1^2 (a_3-a_6) (a_6 (-\kappa
_1+\kappa _3)+a_3 (\kappa _1-\kappa _6)) (a_6 \kappa _3+a_3 \kappa _6)+a_3^4 a_6 (-3 \kappa _1^2+5 \kappa
_1 \kappa _6-2 \kappa _6{}^2)+a_1 (a_6^4 (2 \kappa _1^2-3 \kappa _1 \kappa _3+\kappa _3^2)+a_3^3 a_6 (4 \kappa _1+3 \kappa
_3-5 \kappa _6) (\kappa _1-\kappa _6)+a_3 a_6^3 (\kappa _1-\kappa _3) (4 \kappa _1-5 \kappa _3+3 \kappa _6)+a_3^4
(2 \kappa _1^2-3 \kappa _1 \kappa _6+\kappa _6{}^2)+3 a_3^2 a_6^2 (-4 \kappa _1^2-2 \kappa _3 \kappa _6+3 \kappa _1 (\kappa
_3+\kappa _6)))-a_1^3 (a_3 a_6 (\kappa _3^2-6 \kappa _3 \kappa _6+\kappa _6{}^2+2 \kappa _1 (\kappa _3+\kappa _6))+a_6^2
(\kappa _3 (5 \kappa _3-3 \kappa _6)+\kappa _1 (-4 \kappa _3+2 \kappa _6))+a_3^2 (2 \kappa _1 (\kappa
_3-2 \kappa _6)+\kappa _6 (-3 \kappa _3+5 \kappa _6))))\)
\end{doublespace}

\begin{doublespace}
\noindent\(\pmb{\text{D$\kappa $2}=}
\pmb{\text{Simplify}[}
\pmb{\text{Series}[-D[\text{H2}\text{/.}\,\{b_2\to f[a_2]\},a_2]\text{/.}\,\{\alpha _{11}\to 0,\alpha _{13}\to
0,\alpha _{24}\to 0,\beta _{11}\to 0,\beta _{21}\to 0,}\\
\pmb{\beta _{12}\to 0,\beta _{22}\to 0,\beta _{13}\to 0,\beta _{23}\to 0,\beta _{14}\to 0,\beta _{24}\to 0,\beta _{25}\to 0,\beta _{16}\to 0\},\{\alpha _{22},0,0\}]\text{/.}\,\{\alpha
_{22}\to 0\}]}\)
\end{doublespace}

\begin{doublespace}
\noindent\(\frac{1}{(a_2-a_4){}^3 (a_2-a_5){}^3 (a_4-a_5){}^2}2 (-a_5^5 (\kappa _2-\kappa _4){}^2+a_4
a_5^4 (-3 \kappa _2^2+5 \kappa _2 \kappa _4-2 \kappa _4^2)+a_4^2 a_5^3 (\kappa _2-\kappa _4) (4 \kappa _2-3 \kappa _5)+a_4^3
a_5^2 (4 \kappa _2-3 \kappa _4) (\kappa _2-\kappa _5)-a_4^5 (\kappa _2-\kappa _5){}^2+a_2^4 (a_4 (\kappa
_4-2 \kappa _5)+a_5 (2 \kappa _4-\kappa _5)) (\kappa _4-\kappa _5)-3 a_2^2 (a_4-a_5) (a_5 (-\kappa
_2+\kappa _4)+a_4 (\kappa _2-\kappa _5)) (a_5 \kappa _4+a_4 \kappa _5)+a_4^4 a_5 (-3 \kappa _2^2+5 \kappa
_2 \kappa _5-2 \kappa _5{}^2)+a_2 (a_5^4 (2 \kappa _2^2-3 \kappa _2 \kappa _4+\kappa _4^2)+a_4^3 a_5 (4 \kappa _2+3 \kappa
_4-5 \kappa _5) (\kappa _2-\kappa _5)+a_4 a_5^3 (\kappa _2-\kappa _4) (4 \kappa _2-5 \kappa _4+3 \kappa _5)+a_4^4
(2 \kappa _2^2-3 \kappa _2 \kappa _5+\kappa _5{}^2)+3 a_4^2 a_5^2 (-4 \kappa _2^2-2 \kappa _4 \kappa _5+3 \kappa _2 (\kappa
_4+\kappa _5)))-a_2^3 (a_4 a_5 (\kappa _4^2-6 \kappa _4 \kappa _5+\kappa _5{}^2+2 \kappa _2 (\kappa _4+\kappa _5))+a_5^2
(\kappa _4 (5 \kappa _4-3 \kappa _5)+\kappa _2 (-4 \kappa _4+2 \kappa _5))+a_4^2 (2 \kappa _2 (\kappa
_4-2 \kappa _5)+\kappa _5 (-3 \kappa _4+5 \kappa _5))))\)
\end{doublespace}

\begin{doublespace}
\noindent\(\pmb{\text{D$\kappa $3}=}
\pmb{\text{Simplify}[}
\pmb{\text{Series}[-D[\text{H2}\text{/.}\,\{b_3\to f[a_3]\},a_3]\text{/.}\,\{\alpha _{11}\to 0,\alpha _{13}\to
0,\alpha _{24}\to 0,\beta _{11}\to 0,\beta _{21}\to 0,}\\
\pmb{\beta _{12}\to 0,\beta _{22}\to 0,\beta _{13}\to 0,\beta _{23}\to 0,\beta _{14}\to 0,\beta _{24}\to 0,\beta _{25}\to 0,\beta _{16}\to 0\},\{\alpha _{22},0,0\}]\text{/.}\,\{\alpha
_{22}\to 0\}]}\)
\end{doublespace}

\begin{doublespace}
\noindent\(\frac{1}{(a_1-a_3){}^3 (a_1-a_6){}^2 (a_3-a_6){}^3}2 (a_6^2 (\kappa _1-\kappa _3)-2 a_3
(a_6 (\kappa _1-\kappa _3)+a_1 (\kappa _3-\kappa _6))+a_3^2 (\kappa _1-\kappa _6)+a_1^2 (\kappa
_3-\kappa _6)) (a_1^3 (\kappa _3-\kappa _6)+a_1^2 (a_6 (3 \kappa _3-2 \kappa _6)-a_3 \kappa _6)+a_6
(a_3 a_6 \kappa _1+a_6^2 (\kappa _1-\kappa _3)+a_3^2 (-2 \kappa _1+\kappa _6))+a_1 (a_6^2 (2 \kappa _1-3
\kappa _3)-a_3^2 (\kappa _1-2 \kappa _6)+a_3 a_6 (-\kappa _1+\kappa _6)))\)
\end{doublespace}

\begin{doublespace}
\noindent\(\pmb{\text{(* The obtained expression for}\ \text{D$\kappa $3}\ \text{is simpler than for other $\kappa$'s. This enabled us to simplify } }\\
\pmb{\text{the others and finally to conjecture the relations between the derivatives of}\ \kappa\  \text{and}\ a\ \text{*)}}\)
\end{doublespace}

\begin{doublespace}
\noindent\(\pmb{\text{D$\kappa $4}=}
\pmb{\text{Simplify}[}
\pmb{\text{Series}[-D[\text{H2}\text{/.}\,\{b_4\to f[a_4]\},a_4]\text{/.}\,\{\alpha _{11}\to 0,\alpha _{13}\to
0,\alpha _{24}\to 0,\beta _{11}\to 0,\beta _{21}\to 0,}\\
\pmb{\beta _{12}\to 0,\beta _{22}\to 0,\beta _{13}\to 0,\beta _{23}\to 0,\beta _{14}\to 0,\beta _{24}\to 0,\beta _{25}\to 0,\beta _{16}\to 0\},\{\alpha _{22},0,0\}]\text{/.}\,\{\alpha
_{22}\to 0\}]}\)
\end{doublespace}

\begin{doublespace}
\noindent\(\frac{1}{(a_2-a_4){}^3 (a_2-a_5){}^2 (a_4-a_5){}^3}2 (a_2^5 (\kappa _4-\kappa _5){}^2-a_2^4
(\kappa _4-\kappa _5) (a_4 (2 \kappa _4-\kappa _5)+a_5 (-3 \kappa _4+2 \kappa _5))+a_2^3 (\kappa
_4-\kappa _5) (a_5^2 (3 \kappa _2-4 \kappa _4)+3 a_4^2 \kappa _5+a_4 a_5 (-3 \kappa _2-4 \kappa _4+5 \kappa _5))+a_2
(a_5^4 (2 \kappa _2^2-5 \kappa _2 \kappa _4+3 \kappa _4^2)-a_4 a_5^3 (\kappa _2-\kappa _4) (5 \kappa _2-4 \kappa
_4-3 \kappa _5)+3 a_4^2 a_5^2 (\kappa _2^2-2 \kappa _2 \kappa _4+\kappa _4 \kappa _5)-a_4^4 (\kappa _2^2-3 \kappa _2 \kappa
_5+2 \kappa _5{}^2)+a_4^3 a_5 (\kappa _2^2+2 \kappa _2 (\kappa _4-3 \kappa _5)+\kappa _5 (2 \kappa _4+\kappa _5)))+a_5
(a_5^4 (\kappa _2-\kappa _4){}^2+3 a_4^2 a_5^2 \kappa _2 (-\kappa _2+\kappa _4)-a_4 a_5^3 (\kappa _2^2-3 \kappa _2
\kappa _4+2 \kappa _4^2)-a_4^4 (2 \kappa _2^2-3 \kappa _2 \kappa _5+\kappa _5{}^2)+a_4^3 a_5 (5 \kappa _2^2+2 \kappa _4 \kappa
_5-\kappa _2 (4 \kappa _4+3 \kappa _5)))+a_2^2 (a_5^3 (\kappa _2-\kappa _4) (4 \kappa _4-3 \kappa _5)+3
a_4^2 a_5 (\kappa _2 \kappa _4+\kappa _5 (-2 \kappa _4+\kappa _5))+3 a_4 a_5^2 (\kappa _4 (4 \kappa _4-3 \kappa _5)+\kappa
_2 (-3 \kappa _4+2 \kappa _5))+a_4^3 (\kappa _2 (2 \kappa _4-3 \kappa _5)+\kappa _5 (-4 \kappa _4+5 \kappa
_5))))\)
\end{doublespace}

\begin{doublespace}
\noindent\(\pmb{\text{D$\kappa $5}=}
\pmb{\text{Simplify}[}
\pmb{\text{Series}[-D[\text{H2}\text{/.}\,\{b_5\to f[a_5]\},a_5]\text{/.}\,\{\alpha _{11}\to 0,\alpha _{13}\to
0,\alpha _{24}\to 0,\beta _{11}\to 0,\beta _{21}\to 0,}\\
\pmb{\beta _{12}\to 0,\beta _{22}\to 0,\beta _{13}\to 0,\beta _{23}\to 0,\beta _{14}\to 0,\beta _{24}\to 0,\beta _{25}\to 0,\beta _{16}\to 0\},\{\alpha _{22},0,0\}]\text{/.}\,\{\alpha
_{22}\to 0\}]}\)
\end{doublespace}

\begin{doublespace}
\noindent\(-\frac{1}{(a_2-a_4){}^2 (a_2-a_5){}^3 (a_4-a_5){}^3}2 (-a_2^3 (3 a_5^2 \kappa _4+a_4^2 (3
\kappa _2-4 \kappa _5)+a_4 a_5 (-3 \kappa _2+5 \kappa _4-4 \kappa _5)) (\kappa _4-\kappa _5)+a_2^4 (a_4 (2
\kappa _4-3 \kappa _5)-a_5 (\kappa _4-2 \kappa _5)) (\kappa _4-\kappa _5)+a_2^5 (\kappa _4-\kappa _5){}^2+a_2
(-a_5^4 (\kappa _2^2-3 \kappa _2 \kappa _4+2 \kappa _4^2)-a_4^3 a_5 (5 \kappa _2-3 \kappa _4-4 \kappa _5) (\kappa
_2-\kappa _5)+3 a_4^2 a_5^2 (\kappa _2^2-2 \kappa _2 \kappa _5+\kappa _4 \kappa _5)+a_4^4 (2 \kappa _2^2-5 \kappa _2 \kappa
_5+3 \kappa _5{}^2)+a_4 a_5^3 (\kappa _2^2+\kappa _2 (-6 \kappa _4+2 \kappa _5)+\kappa _4 (\kappa _4+2 \kappa _5)))+a_2^2
(a_4^3 (3 \kappa _4-4 \kappa _5) (-\kappa _2+\kappa _5)+3 a_4 a_5^2 (\kappa _4^2+\kappa _2 \kappa _5-2 \kappa _4
\kappa _5)+a_5^3 (\kappa _4 (5 \kappa _4-4 \kappa _5)+\kappa _2 (-3 \kappa _4+2 \kappa _5))+3 a_4^2 a_5 (\kappa
_2 (2 \kappa _4-3 \kappa _5)+\kappa _5 (-3 \kappa _4+4 \kappa _5)))+a_4 (-a_5^4 (2 \kappa _2^2-3 \kappa
_2 \kappa _4+\kappa _4^2)+a_4^4 (\kappa _2-\kappa _5){}^2+3 a_4^2 a_5^2 \kappa _2 (-\kappa _2+\kappa _5)-a_4^3 a_5 (\kappa
_2^2-3 \kappa _2 \kappa _5+2 \kappa _5{}^2)+a_4 a_5^3 (5 \kappa _2^2+2 \kappa _4 \kappa _5-\kappa _2 (3 \kappa _4+4 \kappa _5))))\)
\end{doublespace}

\begin{doublespace}
\noindent\(\pmb{\text{D$\kappa $6}=}
\pmb{\text{Simplify}[}
\pmb{\text{Series}[-D[\text{H2}\text{/.}\,\{b_6\to f[a_6]\},a_6]\text{/.}\,\{\alpha _{11}\to 0,\alpha _{13}\to
0,\alpha _{24}\to 0,\beta _{11}\to 0,\beta _{21}\to 0,}\\
\pmb{\beta _{12}\to 0,\beta _{22}\to 0,\beta _{13}\to 0,\beta _{23}\to 0,\beta _{14}\to 0,\beta _{24}\to 0,\beta _{25}\to 0,\beta _{16}\to 0\},\{\alpha _{22},0,0\}]\text{/.}\,\{\alpha
_{22}\to 0\}]}\)
\end{doublespace}

\begin{doublespace}
\noindent\(-\frac{1}{(a_1-a_3){}^2 (a_1-a_6){}^3 (a_3-a_6){}^3}2 (-a_1^3 (3 a_6^2 \kappa _3+a_3^2 (3
\kappa _1-4 \kappa _6)+a_3 a_6 (-3 \kappa _1+5 \kappa _3-4 \kappa _6)) (\kappa _3-\kappa _6)+a_1^4 (a_3 (2
\kappa _3-3 \kappa _6)-a_6 (\kappa _3-2 \kappa _6)) (\kappa _3-\kappa _6)+a_1^5 (\kappa _3-\kappa _6){}^2+a_1
(-a_6^4 (\kappa _1^2-3 \kappa _1 \kappa _3+2 \kappa _3^2)-a_3^3 a_6 (5 \kappa _1-3 \kappa _3-4 \kappa _6) (\kappa
_1-\kappa _6)+3 a_3^2 a_6^2 (\kappa _1^2-2 \kappa _1 \kappa _6+\kappa _3 \kappa _6)+a_3^4 (2 \kappa _1^2-5 \kappa _1 \kappa
_6+3 \kappa _6{}^2)+a_3 a_6^3 (\kappa _1^2+\kappa _1 (-6 \kappa _3+2 \kappa _6)+\kappa _3 (\kappa _3+2 \kappa _6)))+a_1^2
(a_3^3 (3 \kappa _3-4 \kappa _6) (-\kappa _1+\kappa _6)+3 a_3 a_6^2 (\kappa _3^2+\kappa _1 \kappa _6-2 \kappa _3
\kappa _6)+a_6^3 (\kappa _3 (5 \kappa _3-4 \kappa _6)+\kappa _1 (-3 \kappa _3+2 \kappa _6))+3 a_3^2 a_6 (\kappa
_1 (2 \kappa _3-3 \kappa _6)+\kappa _6 (-3 \kappa _3+4 \kappa _6)))+a_3 (-a_6^4 (2 \kappa _1^2-3 \kappa
_1 \kappa _3+\kappa _3^2)+a_3^4 (\kappa _1-\kappa _6){}^2+3 a_3^2 a_6^2 \kappa _1 (-\kappa _1+\kappa _6)-a_3^3 a_6 (\kappa
_1^2-3 \kappa _1 \kappa _6+2 \kappa _6{}^2)+a_3 a_6^3 (5 \kappa _1^2+2 \kappa _3 \kappa _6-\kappa _1 (3 \kappa _3+4 \kappa _6))))\)
\end{doublespace}

\begin{doublespace}
\noindent\(\pmb{\text{(* The reduced Hamiltonian}: \text{*)}}\)
\end{doublespace}

\begin{doublespace}
\noindent\(\pmb{\text{H2r}=}
\pmb{\text{Simplify}[}
\pmb{\text{Cancel}[\text{H2}\text{/.}\,\{\alpha _{11}\to 0,\alpha _{13}\to 0,\alpha _{24}\to 0,\beta _{11}\to 0,\beta _{21}\to 0,\beta
_{12}\to 0,\beta _{22}\to 0,\beta _{13}\to 0,\beta _{23}\to 0,}\\
\pmb{\beta _{14}\to 0,\beta _{24}\to 0,\beta _{25}\to 0,\beta _{16}\to 0\}]]\text{/.}\,\{\alpha _{22}\to 0\}}\)
\end{doublespace}

\begin{doublespace}
\noindent\(\frac{(a_5^2 (-\kappa _2+\kappa _4)+a_4^2 (\kappa _2-\kappa _5)+a_2^2 (-\kappa _4+\kappa _5)){}^2}{(a_2-a_4){}^2
(a_2-a_5){}^2 (a_4-a_5){}^2}+\frac{2 (a_5 (-\kappa _2+\kappa _4)+a_4 (\kappa _2-\kappa _5)+a_2 (-\kappa
_4+\kappa _5)) (a_2 a_5 (-a_2+a_5) \kappa _4+a_4^2 (a_5 \kappa _2-a_2 \kappa _5)+a_4 (-a_5^2 \kappa _2+a_2^2
\kappa _5))}{(a_2-a_4){}^2 (a_2-a_5){}^2 (a_4-a_5){}^2}+\frac{(a_6^2 (-\kappa _1+\kappa _3)+a_3^2
(\kappa _1-\kappa _6)+a_1^2 (-\kappa _3+\kappa _6)){}^2}{(a_1-a_3){}^2 (a_1-a_6){}^2 (a_3-a_6){}^2}+\frac{2
(a_6 (-\kappa _1+\kappa _3)+a_3 (\kappa _1-\kappa _6)+a_1 (-\kappa _3+\kappa _6)) (a_1 a_6 (-a_1+a_6)
\kappa _3+a_3^2 (a_6 \kappa _1-a_1 \kappa _6)+a_3 (-a_6^2 \kappa _1+a_1^2 \kappa _6))}{(a_1-a_3){}^2 (a_1-a_6){}^2
(a_3-a_6){}^2}\)
\end{doublespace}

\begin{doublespace}
\noindent\(\pmb{\text{(* For example, check the coincidence Da6}==\text{Da6r}\ \text{where Da6r is obtained from the reduced Hamiltonian}: \text{*)}}\)
\end{doublespace}

\begin{doublespace}
\noindent\(\pmb{\text{Da6r}=\text{Simplify}[D[\text{H2r},\kappa _6]]}\)
\end{doublespace}

\begin{doublespace}
\noindent\(\pmb{-\frac{1}{(a_1-a_3) (a_1-a_6){}^2 (a_3-a_6){}^2}2 }\\
\pmb{(a_1^2 (a_6 \kappa _3+a_3 (2 \kappa _3-3 \kappa _6))-a_3 (a_3 a_6 \kappa  _1+a_6^2 (-2 \kappa _1+\kappa
_3)+a_3^2 (\kappa _1-\kappa _6))+a_1^3 (\kappa _3-\kappa _6)+}\\
\pmb{a_1 (a_6^2 (\kappa _1-2 \kappa _3)+a_3 a_6 (\kappa _1-\kappa _3)+a_3^2 (-2 \kappa _1+3 \kappa _6)))}\)
\end{doublespace}

\begin{doublespace}
\noindent\(\pmb{\text{Da6}==\text{Da6r}}\)
\end{doublespace}

\begin{doublespace}
\noindent\(\text{True}\)
\end{doublespace}

\begin{doublespace}
\noindent\(\pmb{\text{(* Next we find out some equations on }\
\alpha \text{ under additional requirements on $\kappa$'s.}}
\\
\pmb{ \text{We use the notation Aij for $\a_{ij}$ *)}}\)
\end{doublespace}

\begin{doublespace}
\noindent\(\pmb{\text{DA21}=}
\pmb{\text{Simplify}[\text{Series}[D[\text{H2},\beta _{21}]\text{/.}\,\{\alpha _{11}\to 0,\alpha _{13}\to 0,\alpha _{24}\to
0,\beta _{21}\to 0,\beta _{12}\to 0,\beta _{23}\to 0,}\\
\pmb{\beta _{14}\to 0,\beta _{25}\to 0,\beta _{16}\to 0\},\{\alpha _{22},0,0\}]\text{/.}\,\{\alpha _{22}\to 0,\beta _{22}\to 0,\beta _{24}\to 0\}]\text{/.}\,\{\kappa _3\to \kappa _1,\kappa _6\to \kappa _1\}}\)
\end{doublespace}

\begin{doublespace}
\noindent\(\frac{1}{(a_1-a_3){}^3 (a_1-a_6){}^3 (a_3-a_6){}^2}2 \alpha _{21} ((a_6^2 (b_1+b_3)+a_1^2
(b_3-b_6)-a_3^2 (b_1+b_6)+2 a_1 (-a_6 (b_1+b_3)+a_3 (b_1+b_6))) (a_3^2 (a_1
\kappa _1-a_6 \kappa _1)+a_3 (-a_1^2 \kappa _1+a_6^2 \kappa _1)+a_1 (a_1-a_6) a_6 \kappa _3)+a_1 (a_6^3 (b_1+b_3)-a_3^3
(b_1+b_6)+a_1^2 (a_6 (b_1+b_3)-a_3 (b_1+b_6))+2 a_1 (-a_6^2 (b_1+b_3)+a_3^2 (b_1+b_6)))
(a_6 (\kappa _1-\kappa _3)+a_1 (-\kappa _1+\kappa _3))+(a_6^3 (b_1+b_3)+a_1^3 (b_3-b_6)-a_3^3
(b_1+b_6)+a_1^2 (-a_6 (b_1+b_3)+a_3 (b_1+b_6))+a_1 (-a_6^2 (b_1+b_3)+a_3^2 (b_1+b_6)))
(a_6^2 (\kappa _1-\kappa _3)+a_1^2 (-\kappa _1+\kappa _3)))\)
\end{doublespace}

\begin{doublespace}
\noindent\(\pmb{\text{DA21}=\text{Simplify}[\text{DA21}\text{/.}\,\{\kappa _1\to K,\kappa _3\to K,\kappa _6\to K\}]\text{
}\text{(*  we will regard to K as to a constant   *)}}\)
\end{doublespace}

\begin{doublespace}
\noindent\(-\frac{2 K (a_6^2 (b_1+b_3)+a_1^2 (b_3-b_6)-a_3^2 (b_1+b_6)+2 a_1 (-a_6 (b_1+b_3)+a_3
(b_1+b_6))) \alpha _{21}}{(a_1-a_3){}^2 (a_1-a_6){}^2 (a_3-a_6)}\)
\end{doublespace}

\begin{doublespace}
\noindent\(\pmb{\text{(* after a hand made simplification we obtain}:\text{  }\text{*)}}\)
\end{doublespace}

\begin{doublespace}
\noindent\(\pmb{\text{DA21}\text{:=}\frac{2 K ((a_1-a_3){}^2 (b_1+b_6)-(a_1-a_6){}^2 (b_1+b_3))
\alpha _{21}}{(a_1-a_3){}^2 (a_1-a_6){}^2 (a_3-a_6)}}\)
\end{doublespace}

\begin{doublespace}
\noindent\(\pmb{\text{(*}\text{  }\text{Finally} \text{DA21}\text{:=}\frac{2\text{K$\alpha $}_{21}(b_1+b_6) }{ (a_1-a_6){}^2(a_3-a_6)}-\frac{2\text{K$\alpha
$}_{21}(b_1+b_3) }{(a_1-a_3){}^2(a_3-a_6)}\text{   }\text{*)}}\)
\end{doublespace}

\begin{doublespace}
\noindent\(\pmb{\text{DA23}=\text{Simplify}[\text{Series}[D[\text{H2},\beta _{23}]\text{/.}\,\{\alpha _{11}\to 0,\alpha _{13}\to
0,\alpha _{24}\to 0,\beta _{21}\to 0,\beta _{12}\to 0,\beta _{23}\to 0,}\\
\pmb{\beta _{14}\to 0,\beta _{25}\to 0,\beta _{16}\to 0\},\{\alpha _{22},0,0\}]\text{/.}\,\{\alpha _{22}\to 0,\beta _{22}\to 0,\beta _{24}\to 0\}]}\)
\end{doublespace}

\begin{doublespace}
\noindent\(-\frac{1}{(a_1-a_3){}^3 (a_1-a_6){}^2 (a_3-a_6){}^3}2 \alpha _{23} (a_3 (a_6^3 (b_1+b_3)-a_1^3
(b_3+b_6)+a_3^2 (a_6 (b_1+b_3)-a_1 (b_3+b_6))-2 a_3 (a_6^2 (b_1+b_3)-a_1^2 (b_3+b_6)))
(a_6 (-\kappa _1+\kappa _3)+a_3 (\kappa _1-\kappa _6)+a_1 (-\kappa _3+\kappa _6))+(a_6^3 (b_1+b_3)+a_3^3
(b_1-b_6)-a_1^3 (b_3+b_6)+a_3^2 (-a_6 (b_1+b_3)+a_1 (b_3+b_6))+a_3 (-a_6^2 (b_1+b_3)+a_1^2
(b_3+b_6))) (a_6^2 (-\kappa _1+\kappa _3)+a_3^2 (\kappa _1-\kappa _6)+a_1^2 (-\kappa _3+\kappa
_6))+(a_6^2 (b_1+b_3)+a_3^2 (b_1-b_6)-a_1^2 (b_3+b_6)-2 a_3 (a_6 (b_1+b_3)-a_1
(b_3+b_6))) (a_1 a_6 (-a_1+a_6) \kappa _3+a_3^2 (a_6 \kappa _1-a_1 \kappa _6)+a_3 (-a_6^2
\kappa _1+a_1^2 \kappa _6)))\)
\end{doublespace}

\begin{doublespace}
\noindent\(\pmb{\text{(*}\text{  }\text{Finally}\text{   }\text{DA23}=\frac{2\text{K$\alpha $}_{23} ((a_1-a_3){}^2(b_3+b_6)-(a_3-a_6){}^2(b_1+b_3)))
}{(a_1-a_3){}^2 (a_1-a_6) (a_3-a_6){}^2}=\frac{2\text{K$\alpha $}_{23}(b_3+b_6) }{(a_1-a_6)(a_3-a_6){}^2}-\frac{2\text{K$\alpha
$}_{23}(b_1+b_3) }{(a_1-a_3){}^2(a_1-a_6) }\text{   }\text{*)}}\)
\end{doublespace}
    % восстановить в окончательном варианте
}%\small
%%%%%%%%%%%%%%%%%%%%%%%%%%%%%%%%%%%%%%%%
%\begin{thebibliography}{99}
\bibliographystyle{amsalpha}

%}  % Large

%%%%%%%%%%%%%%%%%%%%%%%%%%%%%%%%%%%%%%%%%%%%%%%%%%%%
\end{document}
%%%%%%%%%%%%%%%%%%%%%%%%%%%%%%%%%%%%%%%%%%%%%%%%
%%   THE END
%%%%%%%%%%%%%%%%%%%%%%%%%%%%%%%%%%%%%%%%%%%